\definecolor{darkcyan}{rgb}{0.0, 0.55, 0.55}
\definecolor{brinkpink}{rgb}{0.98, 0.38, 0.5}
\definecolor{lightgreen}{HTML}{90EE90}
\definecolor{cyan(process)}{rgb}{0.0, 0.72, 0.92}
\colorlet{linkequation}{cyan}
\newcommand*{\SavedEqref}{}
\let\SavedEqref\eqref
\renewcommand*{\eqref}[1]{%
  \begingroup
    \hypersetup{
      linkcolor=linkequation,
      linkbordercolor=linkequation,
    }%
    \SavedEqref{#1}%
  \endgroup
}
\newcommand{\bin}{\texttt{Bin}}
\newcommand{\morra}{\texttt{morra}}
\newcommand{\etal}{\textit{et al.}{ }}
\newcommand{\Pv}{\mathtt{Pv}}
\newcommand{\Vfr}{\mathtt{Vfr}}
\newcommand{\bit}{\ensuremath{\{0,1\}}}
\newcommand{\share}[1]{\llbracket #1 \rrbracket}
\newcommand{\Sim}{\ensuremath{\mathsf{Sim}}{ }}
\newcommand{\xmark}{\text{}}
\newcommand{\AdvA}{\ensuremath{\mathcal{A}}}
\renewcommand{\oracle}{\ensuremath{\mathcal{O}}\xspace}
\newcommand{\Z}{\mathbb{Z}}
\newcommand{\N}{\mathbb{N}}
\newcommand{\G}{\mathbb{G}}
\renewcommand{\pp}{\texttt{pp}}
\DeclareMathOperator{\Com}{Com}
\newcommand{\noisen}{n_b}
\newcommand{\parag}[1]{\smallskip\noindent\textbf{#1.}}
\DeclareMathOperator{\Bin}{Bin}
\newcommand{\bvec}[1]{\vec{#1}}
\newtheorem{theorem}{Theorem}[section]
\newtheorem{lemma}[theorem]{Lemma}
\newtheorem{definition}{Definition}
\newtheorem{proof}{Proof}
\begin{document}

\title{Verifiable Differential Privacy} 


\author[1]{Ari Biswas}
\author[2]{Graham Cormode}
\affil[1]{University Of Warwick}
\affil[2]{Meta AI}



\maketitle

\begin{abstract}
Differential Privacy (DP) is often presented as a strong privacy-enhancing technology with broad applicability and advocated as a de-facto standard for releasing aggregate statistics on sensitive data. 
However, in many embodiments, DP introduces a new attack surface: a malicious entity entrusted with releasing statistics could manipulate the results and use the randomness of DP as a convenient smokescreen to mask its nefariousness. 
Since revealing the random noise would obviate the purpose of introducing it, the miscreant may have a perfect alibi.  
To close this loophole, we introduce the idea of \textit{Verifiable Differential Privacy}, which requires the publishing entity to output a zero-knowledge proof that convinces an efficient verifier that the output is both DP and reliable.
Such a definition might seem unachievable, as a verifier must validate that DP randomness was generated faithfully without learning anything about the randomness itself. 
We resolve this paradox by carefully mixing private and public randomness to compute verifiable DP counting queries with theoretical guarantees and show that it is also practical for real-world deployment. 
We also demonstrate that computational assumptions are necessary by showing a separation between information-theoretic DP and computational DP under our definition of verifiability. 
\end{abstract}

\pdfoutput=1

\section{Introduction}
\label{sec:introduction}

We are living in an age of delegation, where the bulk of our digital data is held and processed by others in an opaque fashion. 
Our interactions are collated by digital applications that continually send our personal information to the ``cloud''. 
Servers in the cloud, typically owned by large monolithic organizations, such as Google, AWS or Microsoft, then perform computations on our private data to publish aggregate statistics for social utility.
For example, we send our GPS coordinates to services like Strava and Google which, in exchange, use this information to recommend low-traffic cycling routes~\cite{raturi2021impact}. 
Similarly, we let entertainment companies like Netflix, YouTube, TikTok and Hulu know our personal preferences so that they can better recommend content for us to consume~\cite{bell2007lessons}. 
National census bureaus collect personal information to publish aggregate statistics about the population, and consider doing so a moral duty to ensure transparency in the government's policies~\cite{boyd2022differential}. 

However, it is often the case that published aggregate statistics leak information about the activity of individuals. 
For example, Garfinkel \etal and Kasiviswanathan \etal describe practical reconstruction attacks that can be used to infer an individual's private data from aggregate population statistics ~\cite{garfinkel2019understanding, kasiviswanathan2013power}. 
Boyd \etal show that published census data has been used to discriminate against groups in society based on race ~\cite{boyd2022differential}. 
Hence the information that is released, and how it is computed, requires careful scrutiny. 

In response to these concerns, the privacy and security community have sought to apply various privacy enhancing technologies to protect the privacy of individuals contributing to data releases. 
Most relevant to this discussion is Differential Privacy (DP) and its generalizations, which require computations to be randomized, in order to offer the (informally stated) promise that users will not be adversely affected by allowing their data to be used.  
Typically, this is achieved by adding carefully calibrated random noise to the output, at the expense of reducing the accuracy of the computation. 
Differential privacy is most commonly studied in the \textit{trusted curator} model, where a single entity receives all the sensitive data, and is entrusted to execute the algorithm to apply the random noise.  
Variations that modify the trust and computational model include local privacy \cite{warner1965randomized}, shuffle privacy \cite{balle2019privacy, champion2019securely}, computational differential privacy ~\cite{mironov2009computational} and multi-party differential privacy \cite{mcgregor2010limits}. 


A consistent theme across all existing work is to view DP simply as a privacy preserving mechanism. 
In this paper we shift the focus and view differential privacy through an adversarial lens: \textit{what if an adversary seeks to abuse the protocol and pick noise chosen to distort the statistics, using differential privacy as an attack vector?}
That is, a malicious entity may tamper with the computation in order to publish biased statistics, and claim this reflects the true outcome; any discrepancies may be dismissed as artifacts of random noise. 
Consider a counting query DP protocol to determine the winner of a plurality election, where the users vote for 1 out of $M$ candidates (say, which topping people prefer on their pizza). 
A corrupted aggregator might not be interested in any particular user's vote but in biasing the aggregate output of the protocol instead. 
Thus, if that server has auxiliary information about the preferences of a subset of users, they might tamper with the protocol to exclude those honest voters from the election or tamper with the protocol to bias the results of the election (say, to pineapple) and blame any discrepancies in the result on random noise introduced by DP.
Note that some loss in accuracy for privacy is unavoidable.
By definition, DP requires the output be perturbed by private randomness. 
Often, outputting such random statistics creates tensions between publishing entity and the downstream consumer. 
In 2021, the State Of Alabama filed a lawsuit claiming that the use of DP on census data was illegal~\cite{courtCase}, citing the inaccuracies introduced by DP. 
Thus, to ensure public trust in DP, it is critical to verify that any loss in utility can be attributed solely to unavoidable DP randomness.

To that effect, we formally introduce the idea of \textit{publicly verifiable differential privacy} in both the trusted curator setting and the multi-party setting in presence of active adversaries\footnote{By active adversaries, we mean participants that may deviate from protocol specifications arbitrarily}. Our contributions are as follows:

\begin{enumerate}

    \item{We formally introduce \textit{publicly verifiable differential privacy} in both the trusted curator and client-server multiparty setting \cite{baum2014publicly}. Informally, the entity responsible for releasing DP statistics is required to also output a zero-knowledge proof to verify that the statistic was computed correctly and the private randomness generated faithfully. Such a proof reveals no additional information and still enforces that user privacy is protected via DP but ensures that the curator cannot use DP randomness maliciously.}

    \item{We show concrete instantiations of verifiable DP by computing DP counting queries (histograms) in the trusted curator setting and the client-server multiparty setting. In the trusted curator setting, there is a single aggregating server that sees client data in plaintext and is responsible for outputting a DP histogram and a proof that the DP noise was generated faithfully. In the client-server MPC setting, clients secret share the inputs and send them to $K \geq 2$ servers, who then participate in an MPC protocol to output DP histograms. The protocol itself is secure in that not even the participating servers are able to learn any new information beyond the output nor are they able to tamper with the protocol.}
    
	\item{We conduct experiments to show that our protocols with formal theoretical guarantees are also practical. 
   Additionally, we describe how our protocol $\Pi_\bin$, for verifiable DP counting, can be combined with existing (non-verifiable) DP-MPC protocols, such as PRIO \cite{corrigan-gibbs_prio_2017} and Poplar \cite{boneh_lightweight_2022}, to enforce verifiability.}

	\item{We demonstrate that information-theoretic verifiable DP is impossible. Specifically, if both the prover and verifier are computationally unbounded, then both statistical zero knowledge and unconditional soundness cannot hold. Thus we could either prevent an all-powerful curator from manipulating DP protocols or an all-powerful verifier from being able to distinguish between neighbouring datasets from the output, but not both.  This result is related to an open problem (Open Problem 10.6) of Vadhan~\cite{vadhan2017complexity}, which asks \textit{``Is there a computational task solvable by a single curator with computational differential privacy but is impossible to achieve with information-theoretic differential privacy?''}. 
    In Section \ref{sec:separation} we relate our result to efforts at resolving this question.}

\end{enumerate}

\pdfoutput=1
\section{Preliminaries}
\label{sec:prelims}

\subsection{Notation}

We write $x \xleftarrow{R} U$ to denote that $x$ was uniformly sampled from a set $U$. We denote vectors with an arrow on top as in $\bvec{x} \in \Z_q^M$, where $M$ represents the number of coordinates in the vector and $\Z_q$ represents a prime order finite field of integers of size $q$. We write $\bvec{a} + \bvec{b}$ to mean coordinate-wise vector addition $a + b \mod q$, where $a$ and $b$ are arbitrary coordinates of $\bvec{a}$ and $\bvec{b}$. 
Similarly, when we write $\bvec{a} \times \bvec{b}$, we refer to the coordinate-wise Hadamard product between the two vectors.

\subsection{Privacy and Security Background}
\label{sec:background}

\parag{Indistinguishability} 
We define a computational notion of indistinguishability. 
\begin{definition}[Computational Indistinguishability]
\label{defn:indisitinguish}   Fix security parameter $\kappa \in \N$. Let $\{ X_\kappa\}_{\kappa \in N }$ and $\{ Y_\kappa\}_{\kappa \in N}$ be probability distributions over $\bit^{\texttt{poly}(\kappa)}$. We say that $\{ X_\kappa\}_{\kappa \in N }$ and $\{ Y_\kappa\}_{\kappa \in N }$ are computationally indistinguishable $\{ X_\kappa\}_{\kappa \in N } \stackrel{c}{\equiv} \{ Y_\kappa\}_{\kappa \in N }$ if for all non-uniform PPT turing machines $D$ (``distinguishers''), there exists a negligible function $\mu(\kappa)$ such for every $\kappa \in \N$

\begin{equation}
    \Big| \Pr[D(X_\kappa) = 1] - \Pr[D(Y_\kappa) = 1] \Big| \leq \mu(\kappa) 
\end{equation}
\end{definition}

\parag{Commitments}\label{sec: commitments} Commitments are used in our schemes to ensure that participants cannot change their response during the protocol.
\begin{definition}[Commitments] Let $\kappa \in \N$ be the security parameter. A non-interactive commitment scheme consists of a pair of probabilistic polynomial time algorithms (\texttt{Setup, Com}). The setup algorithm $\pp \leftarrow \texttt{Setup}(1^\kappa)$ generates public parameters $\pp$. Given a message space $\texttt{M}_{\pp}$ and randomness space $\texttt{R}_{\pp}$, the commitment algorithm $\texttt{Com}_{\pp}$ defines a function $\texttt{M}_{\pp} \times \texttt{R}_{\pp}  \rightarrow \texttt{C}_{\pp}$ that maps a message to the commitment space $\texttt{C}_{\pp}$ using the random space. For a message $x \in \texttt{M}_\pp$, the algorithm samples $r_x \xleftarrow{R} \texttt{R}_\pp$ and computes $c_x = \texttt{Com}_{\pp}(x, r_x)$. When the context is clear, will drop the subscript and write $\texttt{Com}_\pp $ as $\texttt{Com}$.
\end{definition}

\begin{definition}[Homomorphic Commitments]
\label{defn:hom_coms}
A homomorphic commitment scheme is a non-interactive commitment scheme such that $\texttt{M}_\pp$ and $ \texttt{R}_\pp$ are fields (with $(+, \times)$) 
and $\texttt{C}_\pp$ are is an abelian groups with the $\otimes$ operator on which the discrete log problem (Definition \ref{defn:discrete_log}) is hard, so that for all $x_1, x_2 \in \texttt{M}_\pp$ and $r_1, r_2 \in \texttt{R}_\pp$ we have
\begin{equation}
\label{eq:hom_coms}
\texttt{Com}(x_1, r_1) \otimes \texttt{Com}(x_2, r_2) = \texttt{Com}(x_1 + x_2, r_1 + r_2)
\end{equation}
\end{definition}


Throughout this paper, when we use a commitment scheme, we mean a non-interactive homomorphic commitment scheme  with the following properties (stated informally here, but formalized in the Appendix \ref{app:sec_defns}): 

\begin{enumerate}
    \item{\textbf{Hiding:}  A commitment $c_x$ reveals no information about $x$ and $r_x$ to a computationally bounded adversary (Definition~\ref{defn:hiding_com}).}

    \item{\textbf{Binding:} Given a commitment $c_x$ to $x$ using $r_x$, there is no efficient algorithm that can find $x^\prime$ and $r_{x^\prime}$ such that $\Com(x^\prime, r_{x^\prime}) = c_x = \Com(x, r_{x})$ (Definition~\ref{defn:binding_com}).}

    \item{\textbf{Zero Knowledge OR Opening:} Given $c_x$, the commiting party is able to prove to a polynomial time verifier that $c_x$ is a commitment to either 1 or 0 without revealing exactly which one it is. We denote such a proof as $\Pi_{\texttt{OR}}$ and say it securely computes the oracle $\oracle_{\texttt{OR}}$, which computes if $c_x \in L_{\texttt{Bit}}$ 
    \begin{equation}
        L_{\texttt{Bit}} = \{c_x: x \in \bit \land c_x = \Com(x, r_x) \}    
    \end{equation}
    where for some $r_x \in \Z_q$. See Appendix \ref{app:sigma_open} for a concrete construction of the $\Sigma$-OR proof using Pedersen Commitment schemes proposed by \cite{damgaard2000efficient}.
    }
\end{enumerate}

In all our experiments and security proofs, we use Pedersen Commitments (PC), though one could replace PC with ~\cite{weng2021wolverine, dittmer2020line, baum2021mathsf}, and still satisfy all the above properties.

\parag{Differential Privacy (DP and IND-CDP)}  We consider two variants of the privacy definition. 

\begin{definition}[Information Theoretic DP~\cite{vadhan2017complexity}]
\label{def:dp_definitions}
Fix $n \in \N, \epsilon \geq 0$ and $\delta \leq n^{-\omega(1)}$. An algorithm $\mathcal{M}: \mathcal{X}^n \times Q \rightarrow \mathcal{Y}$ satisfies $(\epsilon, \delta)$ differential privacy if for every two neighboring datasets $X \sim X^{\prime}$ such that $||X \sim X^{\prime} ||_1 = 1$ and  for every query $Q \in \mathcal{Q}$ we have for all $T \subseteq \mathcal{Y}$
\begin{align}
\label{eq:dp}
    \prob{M(X, Q) \in T} &\leq e^{\epsilon}\prob{M(X^\prime, Q) \in T} + \delta
\end{align}
\end{definition}

A direct corollary of the above definition is that, given $M(X, Q)$ and $M(X^\prime, Q)$, with probability $1- \delta$ even an unbounded Turing Machine $D$ is unable to distinguish between the outputs up to statistical distance $\epsilon$. We will often write $M(X, Q) \stackrel{(\epsilon, \delta)}{\equiv} M(X^\prime, Q)$ as short hand to say that $\mathcal{M}$ is DP.


\begin{definition}[Computational DP~\cite{mironov2009computational}]
\label{def:comp_dp_definitions} 
Fix $\kappa \in \N$ and $n \in \N$. Let $\epsilon \geq 0$ and $\delta(\kappa) \leq \kappa^{-\omega(1)}$ be a negligible function, and let $\mathcal{M} = \{ \mathcal{M}_\kappa : \mathcal{X}_\kappa^n \rightarrow  \mathcal{Y}_\kappa \}_{\kappa \in \N}$ be a family of randomised algorithms, where $\mathcal{X}_\kappa$ and $\mathcal{Y}_\kappa$ can be represented by $\texttt{poly}(\kappa)$-bit strings. We say that $\mathcal{M}$ is \textit{computationally $\epsilon$-differentially private} if for every non-uniform PPT TM's (``distinguishers'') $D$, for every query $Q \in \mathcal{Q}$, and for every neighbouring dataset $X \sim X^\prime$, $\forall T \subseteq \mathcal{Y}_\kappa$ we have 
\begin{equation}    
\Pr\Big[ D(\mathcal{M}(X, q) \in T) = 1\Big] \leq e^{\epsilon}\cdot\Pr\Big[ D(\mathcal{M}(X^\prime, q)\in T) = 1\Big] + \delta(\kappa)
\end{equation}

We will often write $M(X, Q) \stackrel{(\epsilon, \delta)-CDP}{\equiv} M(X^\prime, Q)$ as short hand to say that $\mathcal{M}$ is IND-CDP.
\end{definition}


\begin{definition}(DP-Error) Let $\mathcal{M}: \mathcal{X} \times \mathcal{Q} \rightarrow \mathcal{Y}$ be a $(\epsilon, \delta)$-DP mechanism over $\mathcal{Q}$. Assume that the $L_1$ norm is well-defined on $\mathcal{Y}$. 
For any $n\in \N$, $X \in \mathcal{X}^n$, we define the expected error of the mechanism $\mathcal{M}$ relative to $Q$ as 
\begin{equation}
\label{eq:error}
\texttt{Err}_{\mathcal{M}, Q} = \mathbb{E}[\|Q(X) - \mathcal{M}(X, Q)\|]   
\end{equation}
\noindent
where the expectation is taken over internal randomness of $\mathcal{M}$. \end{definition}

When the context is abundantly clear, to simply notation we will drop the subscripts and refer to equation \eqref{eq:error} as just \texttt{Err}.
It is well known that for negligible $\delta$, the counting query (i.e., DP histograms) has error $\texttt{Err} = O(\frac{1}{\epsilon})$ in the trusted curator model and MPC model~\cite{vadhan2017complexity, corrigan-gibbs_prio_2017}.

\parag{Binomial Mechanism} We use Binomial noise to achieve privacy. 

\begin{lemma}[Binomial Mechanism]
\label{theorem:dp_guarantee}
 Let $X=(x_1, \dots, x_n) \in \Z_q^n$ and define counting query $Q(X) = \sum_{i=1}^n x_i$. Fix $\noisen > 30$, $0 < \delta \leq o(\frac{1}{\noisen})$ and let  $Z \sim \texttt{Binomial}(\noisen, \frac{1}{2})$. Then $Z + Q(X)$ is an $(\epsilon, \delta)$-differentially private with $\epsilon = 10\sqrt{\frac{1}{\noisen}\ln\frac{2}{\delta}}$.
\end{lemma}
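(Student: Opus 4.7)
The plan is to follow the classical pointwise-privacy-loss analysis of the Binomial mechanism, viewing the two output distributions $Z + Q(X)$ and $Z + Q(X')$ for neighbouring $X \sim X'$ as two copies of $\mathrm{Binomial}(n_b, 1/2)$ shifted by quantities differing by at most $1$ (the $L_1$ sensitivity of the counting query over $\{0,1\}$-valued inputs). The strategy is to identify a high-probability window around the mean $n_b/2$ where the pointwise ratio of the two binomial p.m.f.s is close to $1$, and to absorb the remaining mass into the $\delta$ term of Definition \ref{def:dp_definitions}.

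First I would do the concentration step. By Hoeffding's inequality for $Z \sim \mathrm{Binomial}(n_b, 1/2)$,
\begin{equation*}
\Pr\!\bigl[\,|Z - n_b/2| > \tau\,\bigr] \;\le\; 2\exp(-2\tau^2/n_b),
\end{equation*}
so picking $\tau := \sqrt{(n_b/2)\ln(2/\delta)}$ makes the right-hand side at most $\delta$. Define the ``good'' set of outputs $T^\star := \{\,t : |t - Q(X) - n_b/2| \le \tau\,\}$. Then $\Pr[Z + Q(X) \notin T^\star] \le \delta$, which will supply the additive slack in \eqref{eq:dp}.

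Second I would bound the pointwise ratio on $T^\star$. Writing $p_k = \binom{n_b}{k}/2^{n_b}$, the ratio at adjacent indices satisfies
\begin{equation*}
\frac{p_{k-1}}{p_k} \;=\; \frac{k}{n_b - k + 1}.
\end{equation*}
Substituting $k = n_b/2 + s$ with $|s| \le \tau$ gives $p_{k-1}/p_k = 1 + (2s-1)/(n_b/2 - s + 1)$, and a first-order expansion of $\log(1+x)$ yields $|\log(p_{k-1}/p_k)| \le 4|s|/n_b \cdot (1 + o(1))$ in the regime $n_b > 30$ and $|s|/n_b \to 0$, which is guaranteed by $\delta \le o(1/n_b)$. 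Plugging in $|s| \le \tau$ gives $|\log(p_{k-1}/p_k)| \le 2\sqrt{2}\sqrt{\ln(2/\delta)/n_b}\,(1 + o(1))$, and the looser constant $10$ in the statement comfortably absorbs the second-order terms in the logarithm expansion and the constants from Hoeffding. Since the two shifted binomials differ by at most one index, this bounds the pointwise privacy loss on all of $T^\star$ by $\epsilon \le 10\sqrt{\ln(2/\delta)/n_b}$.

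Finally I would stitch the pieces together: for any $T \subseteq \mathcal{Y}$,
\begin{equation*}
\Pr[Z+Q(X)\in T] \;\le\; \Pr[Z+Q(X)\in T\cap T^\star] + \delta \;\le\; e^{\epsilon}\Pr[Z+Q(X')\in T] + \delta,
\end{equation*}
which is exactly the $(\epsilon,\delta)$-DP condition. The main obstacle I expect is bookkeeping around the constants: one must verify that the Hoeffding constant, the first-order $\log$-expansion error, and the $1/n_b$ correction terms all fit inside the margin between the sharp constant $2\sqrt{2}$ and the loose constant $10$, and that $n_b > 30$ together with $\delta = o(1/n_b)$ is enough to make the $o(1)$ remainder genuinely small. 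No intrinsically new ideas are needed beyond this careful constant-tracking.
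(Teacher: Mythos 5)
Your proposal is correct and follows essentially the same route as the paper's proof: the paper also splits the output space into a high-probability window around the mean (via a multiplicative Chernoff bound rather than Hoeffding), bounds the adjacent-index ratio $\binom{n_b}{k-1}/\binom{n_b}{k} = k/(n_b-k+1)$ pointwise on that window, and absorbs the tail mass into $\delta$; the only difference is that the paper packages these two steps through the abstract $(\epsilon,\delta,k)$-smoothness framework of Ghazi \emph{et al.}\ and then specializes, whereas you argue directly for $\mathrm{Binomial}(n_b,1/2)$ with sensitivity $1$. One small caution on your justification: the hypothesis $\delta \le o(1/n_b)$ is an \emph{upper} bound on $\delta$ and so does not by itself force $\tau/n_b = \sqrt{\ln(2/\delta)/(2n_b)}$ to be small --- that requires $\ln(2/\delta) \ll n_b$, an implicit restriction to the small-$\epsilon$ regime that the paper's own proof also makes (via its assumption $\epsilon \in [0,1]$ when bounding $\alpha \ge \epsilon/\sqrt{5}$), so it is not a gap relative to the paper but should be stated as an assumption rather than derived from the $\delta$ bound.
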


It is easy to see that the binomial mechanism incurs constant DP-error (i.e., it is independent of $n$, and depends only on $\epsilon, \delta$). The proof for Lemma \ref{theorem:dp_guarantee} can be found in \cite{ghazi_power_2020}, which we re-derive in Appendix \ref{app:dp_proof_bin_mech} for completeness.

\begin{algorithm}[t]
\caption{$\Pi_\text{morra}$ A protocol for sampling a public coin}
\label{alg: morra}
\flushleft \textbf{Input}: $\lambda_1, \dots, \lambda_K$ \\
\textbf{Output}: $z \xleftarrow{R} \bit$ 

\begin{enumerate}
    \item {Each server $k \in [K]$ is asked to sample $m_k \xleftarrow[]{R} \Z_q$ uniformly at random.}
    
    \item{\textit{Commit}: Each server samples $r_{m_k} \xleftarrow[]{R} \Z_q$ and broadcasts $c_{k} = \Com(m_k,r_k)$ to all other servers. Assume without loss of generality that the servers broadcast their commitments in natural lexicographical order $k \in [K]$. }
    
    \item{\textit{Reveal}: Once all servers have received $c_k$, they now broadcast $m_k, r_{m_k}$ to all servers in the reverse order in which the commitments arrived. 
    It is important that the reverse order is respected as it guarantees that each server's inputs are independent of the inputs of other servers. 
    Once all commitments are revealed, each server verifies that $\Com(m_k, r_k) = c_k$. If this test fails for any $k$ or one of the servers does not respond, the protocol is aborted.}
    
    \item{Each server computes $X = (m_1 + \dots + m_k) \mod q$. We have $X \xleftarrow{R} \Z_q$. If $X \leq \lceil \frac{q}{2} \rceil$ then $c_i = 0$. Otherwise $c_i = 1$. Thus we can use this protocol to generate unbiased coins and uniformly random values. }
\end{enumerate}
\end{algorithm}

\parag{Morra}
\label{sec: morra}
We will prove zero knowledge (or security for MPC) assuming that the provers and verifiers (or all participants of the MPC, respectively) have access to an oracle that returns a polynomial sized stream of publicly random unbiased bits. 
In other words, we assume that all parties have access to an oracle functionality $\oracle_\texttt{morra}(1^{\kappa}, \lambda_1, \dots, \lambda_K) = z$ where $z \xleftarrow{R} \bit$ where and $\lambda_k$ refers to the empty string for all $k \in [K]$. 

In practice, this oracle is replaced by a lightweight MPC protocol such as $\Pi_{\texttt{morra}}$ defined in Algorithm~\ref{alg: morra}, which is a modification of an ancient game called Morra\footnote{\url{https://en.wikipedia.org/wiki/Morra_(game)}}, that securely computes $\oracle_\texttt{morra}$ in the presence of a dishonest majority of active participants.
It is easy to see that as long as one participant is honest and samples its value uniformly at random, the final protocol produces an unbiased coin.  Since the commitment is hiding, a corrupt party cannot infer any information about the other parties choice $m_k$ from the published $c_{m_k}$ and by the binding property, a participant cannot change their decision after observing another party's opening. A formal simulator-styled proof can be found in Blum's seminal work for flipping coins over a telephone~\cite{blum1983coin} or any introductory textbook on MPC (under the title weak coin flipping).  If we omit the final thresholding step, the above protocol can be used to sample $z \xleftarrow{R} Z_q$.

\section{Security Models for Verifiable DP}
\label{sec:model}

This section introduces verifiable DP in both the single trusted curator and MPC model. 
In both settings, the input comes from $n$ distinct clients. 
Informally, the main difference between the two models is that the former has plaintext access to the client data. In contrast, in MPC-DP, the clients secret share (or partition) their inputs and each server receives information theoretically hiding shares (or a partial view) of client inputs. Additionally, instead of a single trusted entity computing $\mathcal{M}$, the servers participate in an MPC protocol $\Pi$ to securely compute $\mathcal{M}$ without revealing any information other than $\mathcal{M}(x_1, \dots, x_n, Q)$. 

 For some queries $Q \in \mathcal{Q}$, the protocol requires that the client inputs come from a restricted subset $L \subseteq \mathcal{X}$. 
 For such cases, if the servers are operating on information-theoretically hiding shares of the inputs, the clients must send a zero-knowledge proof so that the provers can verify that the inputs come from the specified language, without learning any other information about the inputs.
 Examples of such proofs can be found in the prior literature~\cite{boneh_lightweight_2022, boyle2019secure, corrigan-gibbs_prio_2017, bunz2018bulletproofs}. 
 In the definitions below and in what follows, we use the terms $\Pv$ (prover), server and curator interchangeably and the terms analyst and $\Vfr$ (verifier) to refer to the same entity.

\subsection{Verifiable DP}

In what follows, we describe the MPC model and then discuss how it can be specialized to the trusted curator model. 
Let $\mathcal{M}$ be a DP (or IND-CDP) mechanism as described in Definition \ref{def:dp_definitions} (or Definition \ref{def:comp_dp_definitions} resepectively) for a query class $\mathcal{Q}$. 
Let $\kappa \in \N$ denote the security parameter. 
A verifiable DP mechanism for $\mathcal{M}$ for query class $\mathcal{Q}$ on a dataset $X$, consists of two interactive protocols $\texttt{Setup}$ and $\Pi$.
The first protocol denoted by $\texttt{Setup}$ is used to generate public parameters. Let $\pp \leftarrow \texttt{Setup}(1^\kappa)$ denote such public parameters. 
The second protocol $\Pi$ is a multiparty protocol  between $K+1$ ``next-message-computing-algorithms'' algorithms $\Vfr$ and $(\Pv_1, \dots, \Pv_K)$. 
The total number of rounds of message passing is upper bounded by some polynomial $\texttt{poly}(\kappa)$. 
In message passing algorithms, $\Vfr$'s (respectively $\Pv_k$'s)  message $m_i$ at round $i$ is determined by its input, messages it has received so far from $\Pv_k$ (respectively $\Vfr$) and internal randomness $\bvec{r}_v$ (respectively $\bvec{r}_{\Pv_k}$).
Let $\vec{\Pv}$ denote a succinct representation for $(\Pv_1, \dots, \Pv_K)$.
Each prover $\Pv_k$ receives on its input tape n inputs $\Big( \share{x_1}_k, \dots, \share{x_n}_k\Big)$, succinctly denoted by $\vec{X_k}$.
  Let $\bvec{r}_{\Pv_k} \in \bit^{\texttt{poly}(\kappa)}$ denote the internal randomness for $\Pv_k$;
let $z \in \bit^{\texttt{poly}(\kappa)}$ denote auxiliary input available to the verifier; and let $\bvec{r}_v \in \bit^{\texttt{poly}(\kappa)}$ denote the internal randomness used by the verifying algorithm. 
At the end of the protocol, the provers send $\bvec{y} \in \mathcal{Y}$ to the $\Vfr$, who then outputs either 0 or 1, with 1 indicating that the verifier accepts the provers' claim that, the real protocol output is indistinguishable from an ideal computation, i.e. $\bvec{y}  = \mathcal{M}(X, Q)$. 
Let $\texttt{out}(\Vfr, \bvec{y}, \bvec{r}_v,  \bvec{r}_{\vec{\Pv}}, z, \vec{\Pv}, \pp) \in \bit$ denote the verifying algorithm's decision.  
In the definition below, we write $\texttt{out}(\Vfr, \Pv)$ as shorthand for $\texttt{out}(\Vfr, \bvec{y}, \bvec{r}_v,  \bvec{r}_{\vec{\Pv}}, z, \vec{\Pv}, \pp)$.
The trusted curator can be understood as essentially this model with a single prover, i.e., we set $K=1$. 
Thus the only functional difference between MPC-DP and trusted curator DP is that in the latter case, the curator sees all the data in plaintext. In MPC, the data may be secret, shared or partitioned across the provers. In both cases, the prover(s) must prove that they did not tamper with the protocol to generate an output that is distinguishable from the ideal computation of $\mathcal{M}$.

\begin{definition}[Verifiable DP]\label{defn:vdp_MPC}
 A constant round interactive verifiable DP protocol for $\mathcal{M}$ consists of two algorithms $\texttt{Setup}$ and $\Pi$, such that for $n \in \N$ clients, $K \geq 1$ provers denoted by $\vec{\Pv}$ and a single verifier $\Vfr$, there exists negligible functions $\delta_c$ and $\delta_s$ such that 

\begin{enumerate}
    \item{\textbf{Completeness:} Let $X=(x_1, \dots, x_n) \in \mathcal{X}^n$ be the client inputs that have been split in $K$ shares $(\vec{X_1}, \dots, \vec{X_K})$, where $\vec{X_j}$ denotes the input sent to the $j$'th prover, then as long as the $\vec{\Pv}$ and $\Vfr$ honestly execute $\Pi$, then we have
    
     \[ \Pr\left[ \texttt{out}(\Vfr, \vec{\Pv}) = 0 : \begin{array}{c} \pp \leftarrow \texttt{Setup}(1^\kappa) \\
      \Pv_j \leftarrow (\vec{X_j}, \bvec{r}_{\Pv_j}, \pp)\\
     \Vfr \leftarrow (z, \bvec{r}_v, \pp) \\
\bvec{y} \leftarrow \Pi(\vec{\Pv} ,\Vfr, \pp)
    \end{array} \right]
  \leq \delta_c . \] 
  }

    \item{\textbf{Soundness:} For every $X \in \mathcal{X}^n$ and any subset $I \subseteq [K]$, let $\vec{\Pv^*}$ denote the collection of corrupted provers, indexed by $I$, that deviate from $\Pi$, such that the final output $\bvec{y} \neq \mathcal{M}(X, Q)$, then we have

     \[ \Pr\left[ \texttt{out}(\Vfr, \vec{\Pv}) = 1 : \begin{array}{c} \pp \leftarrow \texttt{Setup}(1^\kappa) \\
           \Pv_j \leftarrow (\vec{X_j}, \bvec{r}_{\Pv_j}, \pp) \\
     \Vfr \leftarrow (z, \bvec{r}_v, \pp) \\
\bvec{y} \leftarrow \Pi(\vec{\Pv^*} ,\Vfr, \pp)
    \end{array} \right]
  \leq \delta_s . \] 
  
  Note that the correctness of the protocol is defined in terms of the actual inputs the clients sent to $\mathcal{M}$ and not the inputs a corrupted set of provers might have used. Thus if corrupted prover(s) tampered with the inputs on its input tape and then executed the protocol faithfully, by our definitions, it still counts as cheating.

}
     
    \item{\textbf{Zero-Knowledge}: Let $\vec{\Pv^*}$ denote the collection of provers corrupted by any verifying algorithm $\Vfr^*$ and $I \subset [K]$ denote their indices. Let $\vec{\Pv^*}$ denote the set of honest provers not indexed by $I$.
    Let $\texttt{View}\Big[\Pi\Big((\vec{\Pv}, \vec{\Pv^*}) ,\Vfr^*\Big)\Big]$ be the joint distribution\footnote{As $\mathcal{M}$ is a random function, the \textit{joint distribution} of the view of the adversary and their output must be indistinguishable from the simulated transcript (and not just the view of the adversary). See \cite{lindell2017simulate} for more details.} of messages and output received during the execution of $\Pi$ in the presence of corrupted parties. There exists a PPT Simulator $\Sim_{(\Vfr^*, I)}$ such that for all $\bvec{y} = \mathcal{M}(X, Q)$
    \begin{align*}
    \texttt{View}\Big[\Pi\Big((\vec{\Pv}, \vec{\Pv^*}) ,\Vfr^*\Big)\Big] &\stackrel{}{\equiv} \Sim_{(\Vfr^*, I)}(\bvec{y}, \bvec{r}_v, z, \pp)        
    \end{align*}

    where $z \in \bit^{\texttt{poly}(\kappa)}$ represents auxiliary input available to all the corrupted parties. Contrary to soundness, for zero-knowledge to hold, the simulated transcript should be indistinguishable from the actual protocol transcript, based on the inputs adversaries used and not the ones the clients sent to a set of corrupted provers.
    }

\end{enumerate}
 \end{definition}
 
An interesting point to note is that in verifiable differential privacy, the verifier plays a dual role. An honest verifier ensures that the output is faithfully generated and thus plays an active role in generating the DP noise without ever seeing it in plaintext. On the other hand, a dishonest verifier tries to tamper with the protocol to breach privacy. In non-verifiable DP, the analysts (verifier) only have access to DP the statistic. They have no agency over how the output is generated. Thus the verifier participating in verifiable DP has a greater attack surface than a classical adversary in traditional non-verifiable DP. We elaborate on this in Section \ref{sec:separation}, when trying to establish separations between statistical DP and computational DP. Additionally, just like in standard MPC, in the presence of a dishonest majority of corrupted participants, we do not treat early exiting by corrupted parties as a breach of security. This is easily detected by the honest parties, and the output is ignored. Verifiable DP, just like interactive zero-knowledge proofs \cite{goldreich_foundations_2007} comes in 24 different flavours based on the capabilities of the corrupted parties:
 
 \begin{enumerate}
 	\item{\textbf{Distinguishability:} Based on the distinguishability properties of the simulator algorithm, the protocol may be perfect, statistical or computationally zero-knowledge. The protocol described in Section \ref{sec:single_curator_hist} is computationally zero-knowledge. } 
 	\item{\textbf{Verifier specifications:} Based on whether the verifier is expected to follow the rules of the protocol (semi-honest) or may deviate arbitrarily (active), we get honest-verifier zero knowledge or just zero knowledge. All our results are zero-knowledge.}  	
 	\item{\textbf{Soundness:} Based on the power of the corrupted provers, the protocol may be computationally sound (also known as arguments) or statistically sound (secure against unbounded provers). The verifiable DP protocol in Section \ref{sec:single_curator_hist} is computationally sound.}  	
 	\item{\textbf{Inputs:} Based on whether the verifier has access to the auxiliary input, the protocol could be plaintext zero-knowledge or auxiliary input zero knowledge. Our protocols allow for the verifier to have auxiliary input.}  	 	
 \end{enumerate}

\section{Verifiable Binomial Mechanism}
\label{sec:single_curator_hist}

This section describes how to compute counting queries verifiably with differential privacy in both the single curator and client-server MPC models. 
We consider  the trusted curator model to be a special instantiation of the general MPC model where the number of provers $K=1$.
In Section \ref{sec:intuition} we describe intuitions for our protocol, and in Section \ref{sec:client-server-mpc-dp} we explain what is needed for verifiability in the MPC setting, and tackle the additional challenges of verifying client inputs.
We describe how prior efforts at verifying clients fall short of the security expectations of Definition~\ref{defn:vdp_MPC}. 
Finally, in Section \ref{sec:main_protocol} we describe a protocol that computes counting queries with DP verifiably.

Set $\mathcal{X} = \Z_q = \mathcal{Y}$, where $\Z_q$ is a prime order finite field of size $q$ over the integers. Let $X=(x_1, \dots, x_n)$ denote the client inputs and the $Q$ be the counting query $Q(X) = \sum_{i=1}^n x_i$.  Let $\share{x_i}_k$ denote the $k$'th additive secret\footnote{Although we describe our protocols with additive secret sharing, any linear secret sharing such as Shamir's secret sharing also applies to all our results.} share of a client input $x_i$. Each client splits their input into $K$ secret shares and distributes them across the provers. We will assume that $n\ll q$ and $\kappa= \lfloor \log_2 q\rfloor$ can be viewed as the security parameter. For $K \geq 1$ provers and 1 verifier, define the oracle functionality $\mathcal{M}_{\texttt{Bin}}$ in the ideal world as follows:
\begin{enumerate}
	\item{$\mathcal{M}_{\bin}$ receives public privacy parameters $\epsilon$ and $\delta$. It then computes $\noisen$ based on Lemma \ref{theorem:dp_guarantee}.}

	\item{Let $\Big(\share{x_1}_k, \dots,\share{x_n}_k\Big)$ denote the inputs on the $k$'th prover's input tape. Each prover $\Pv_k$  is expected to compute $X_k = \sum_{i=1}^n \llbracket x_i \rrbracket_k$ and sends to $\mathcal{M}_{\bin}$ as its input $X_k$. A corrupted prover might send an arbitrary input.} 
	
\item{$\mathcal{M}_{\bin}$ samples $\Delta_k \sim \texttt{Binomial}(\noisen, 1/2)$ independently for each input $X_k$ it receives. It then computes 
\begin{equation}
\label{eq:M_bin}
\textstyle
y =  \sum_{k=1}^K (X_k + \Delta_k)
\end{equation}
}
\item{$\mathcal{M}_{\bin}$ sends the tuple $(y, \Delta_k)$ as output to each prover $\Pv_k$. On receiving its output, the $\Pv_k$ sends $\texttt{CONTINUE}$ to $\mathcal{M}_{\bin}$. Once $\mathcal{M}_{\bin}$ receives the continue signal from prover $\Pv_k$ it moves on to deliver output to $\Pv_{k+1}$.}

\item{After all $K$ provers have sent $\texttt{CONTINUE}$, $\mathcal{M}_{\bin}$ sends $y$ as output to the verifier $\Vfr$. If a single prover fails to send the continue message and thereby exits the protocol early, the verifier and the remaining provers do not receive any output.}
\end{enumerate}

When $K=1$, i.e., the trusted curator setting, the single prover receives $n$ client inputs in plaintext, so $\share{x_i}_k = x_i$ for all $i \in [n]$. 
This is equivalent to an adversary corrupting all $K$ provers. Thus in the MPC setting with $K \geq 2$ servers, it is safe to assume at least one of them will follow the protocol.
Our goal is to be able to come up with an interactive protocol $\Pi_\bin$, 
which allows us to compute $\mathcal{M}_{\texttt{Bin}}$ verifiably as per Definition~\ref{defn:vdp_MPC}. 
Notice that in the ideal model definition above, the oracle adds $K$ independent copies of DP noise to the output, whereas 
 Lemma \ref{theorem:dp_guarantee} only calls for a single copy. 
 This is because, as we allow up to $K-1$ provers to collude with a corrupted verifier, the corrupted provers could simply not add any noise to the output. 
Ben Or \etal's completeness results  \cite{ben2019completeness} imply that $K$ independent copies of noise are \textit{necessary} to guarantee differential privacy unless the number of corruptions can be restricted to being strictly less than $\frac{K}{3}$, so each prover must independently generate enough noise to guarantee DP.
Our protocols defined below are secure against computationally bounded provers and verifiers that may deviate arbitrarily from protocol specifications and have access to auxiliary inputs.

\subsection{An Intuitive But Incomplete Protocol}
\label{sec:intuition}
Before describing the full protocol in Section \ref{sec:main_protocol} and Figure \ref{fig:dp_interactive_proof}, we provide the reader with some intuition as to why the protocol works for a single curator and verifier. \textit{In this section, we make the unrealistic assumption that prover and verifier behave faithfully}. 
Assume all parties have joint oracle access to $\oracle_{\texttt{Morra}}$ (as described in Section \ref{sec: morra}) to jointly sample unbiased bits $(b_1, \dots, b_{\noisen})$. It is easy to see that using $(\sum_{i=1}^{\noisen} b_i)$ as DP randomness results in the desired distribution defined in $\mathcal{M}_{\texttt{Bin}}$. However, the oracle output is publicly known to both the verifier and prover; therefore, it cannot be directly used to guarantee differential privacy. 
As discussed earlier, this problem of proving that a prover faithfully sampled random bits without disclosing them lies at the heart of any verifiable DP protocol. 
Thus the protocol must combine public coins that satisfy verifiability requirements and private coins that ensure secrecy.

The protocol for verifiable DP counting proceeds in $\noisen$ identical and independent invocations (run in parallel). 
In copy $i$, the prover samples $v_i \in \bit$, which it keeps private. 
Note that a prover could sample this bit using any arbitrary bias. 
As this is the provers' private coin, the verifier has no control over how the prover generates this information. 
After the prover has sampled their private bit, the prover and verifier make one call to $\oracle_{\texttt{Morra}}$ to get an unbiased coin denoted by $b_i$. 
Next, the prover locally computes  $\hat{v}_i = b_i \oplus v_i$. Here $\oplus$ refers to the boolean XOR operation. It is easy to see that $\hat{v}_i$ has the same distribution as $b_i$, but its value is known only to the parties with access to $v_i$, i.e., the prover. 
After $\noisen$ rounds, the prover computes $Q(X)$ and $Z = \sum_{i=1}^{\noisen} \hat{v}_i$ and outputs $Q(X) + Z$ where $Z$ is used as DP randomness.  By the assumption that the prover and verifier are faithful, $Z$ is distributed according to the desired distribution stated in Theorem \ref{theorem:dp_guarantee}, and its value is only known to the prover. 
To make this protocol practical, we need to resolve a few issues.

\begin{enumerate}
    \item{Although the above description requires a bitwise XOR operation to ensure the right distribution is used, we operate with arithmetic circuits in the actual protocol. 
    Thus, the provers could sample arbitrary values $v^* \in \Z_q$ such that $v^* \notin \bit$, and we need to fix how to express the XOR operation via arithmetic circuits. 
    }

    \item{Even if we could verify that the prover sampled a private bit correctly, we still need to verify that they faithfully performed the local operations discussed above.}
\end{enumerate}

Thus, if we could guarantee that each server performed its computations correctly and sampled a private value from the correct set, we would get the desired outcome of verifiable and DP counting queries.

\begin{figure}[t]
    \centering    
	\includegraphics[width=0.8\textwidth]{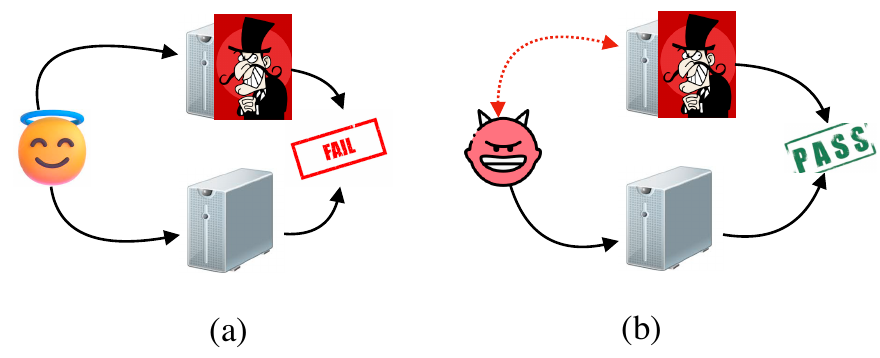}
	\caption{Two types of attacks that go undetected in Poplar. In (a) regardless of what the honest client sends, a corrupted server simply ignores the input and excludes the client from the protocol based on auxiliary information. In (b) a dishonest client colludes with the corrupted server by revealing secret values, so that an illegal input is included. 
 In both cases, the honest server cannot distinguish between an honest run and a corrupted run of the protocol.}
	\label{fig:attacks}
\end{figure}

\subsection{Extending To Client-Server MPC-DP}
\label{sec:client-server-mpc-dp}

To compute DP histograms verifiably in the client-server MPC-DP setting, we use the same computational model used for PRIO \cite{corrigan-gibbs_prio_2017} and Poplar \cite{boneh_lightweight_2022}, two real-world systems deployed at scale by Mozilla\footnote{\url{https://blog.mozilla.org/security/2019/06/06/next-steps-in-privacy-preserving-telemetry-with-prio/}}. 
As discussed earlier, in this setting $n$ clients secret share their inputs $x_i \in L$ amongst $K \geq 2$ provers, where $L \subseteq \mathcal{X}$ defines the language of legal inputs to the protocol. 
For computing $M$-bin histograms over $n$ inputs, $L$ is the set of all one-hot encoded vectors of size $M$. 
For the core problem of a single-dimensional counting query, $M=1$ and $L=\bit$.  
Since the inputs on the prover's tapes reveal no information about a client's input, for the protocol to be useful the provers must first verify in zero-knowledge that $x_i \in L$ before using such inputs to compute aggregate statistics. 
This additional step of verifying a client is not required in the trusted curator model, as the prover decides what inputs should be included in the computation and can see them in plaintext.

\parag{Verifying Clients in MPC-DP}
\label{sec:client_verification}
Poplar and PRIO use efficient sketching techniques from \cite{boyle_function_2016} to validate a client's input in zero knowledge \textit{without} relying on any public key cryptography. Thus, as long as at least one out of $K$ provers does not reveal the inputs it received, even an unbounded adversary corrupting the remaining provers cannot ascertain any information about an honest client's input. While such a system protects an honest client's privacy from an unbounded adversary, it is not verifiable as per Definition \ref{defn:vdp_MPC}. 
Specifically, for the techniques used in PRIO and Poplar, a single corrupted prover could tamper with its inputs and exclude an honest client from the protocol by forcing them to fail the verification test. 
Alternatively, a corrupt client could collude with a prover to include arbitrary inputs, jeopardising the correctness of the output. 
Figure~\ref{fig:attacks} summarises these attacks on Poplar and PRIO\footnote{Concretely, in scenario (b), the dishonest client reveals the values $\kappa$ and $[v]_0$ to the server. This allows the server to set $z_1 = -z_0, z_1^* = -z_0^*$ and $z_1^{**} = -z_0^{**}$, thereby 
 admitting an illegal input into the protocol.}. By our definitions of verifiability, the protocol's output \textit{must} be a function of the inputs provided by honest clients only. 
 Thus the protocol described in Section \ref{sec:main_protocol} provides the following additional guarantees:

\begin{enumerate}
	\item{\textbf{Guaranteed Inclusion Of Honest Clients}: If a client submits shares of an input $x \in L$, then the final output of the protocol is guaranteed to use this input untampered. Thus an honest client is assured that, as long as a single prover follows the protocol specifications, no one learns any information about their private input and their input is correctly used to compute the final output. }
	\item{\textbf{Guaranteed Exclusion Of Corrupt Clients}: A corrupted client, even one that has control over any proper subset of the $K$ provers, cannot include an invalid input to the protocol. Thus if $x \notin L$, $x$ is discarded by our protocol with overwhelming probability.}
\end{enumerate}

It is important to note that as we operate under stricter notions of privacy and correctness, our results require the use of public-key cryptography and security holds only against computationally bounded adversaries.  Furthermore, we show in Section \ref{sec:separation} that it is impossible to satisfy verifiable DP and provide information theoretic guarantees.

\begin{figure*}
    \centering
\begin{pchstack}[boxed]  
    \pseudocode[linenumbering , skipfirstln]{%
    \textbf{Public Verifier}(\Vfr) \< \< \textbf{Prover}(\Pv_k) \\[][\hline]
    \texttt{pp} \leftarrow \texttt{Setup}(1^\kappa) \< \text{Generate public parameters} \<   \texttt{pp} \leftarrow \texttt{Setup}(1^\kappa) \\
	\Bigg\{  \Big\{ c_{i,k}  \Big\}_{k \in [K]} \Bigg\}_{i \in [n]} \< 
 \text{Client inputs }  \<  \Big\{ \share{x_i}_k, r_{i,k} \Big\}_{i \in [n]}\\
  \text{Verify using } \oracle_{\texttt{OR}} \text{ that} \forall i \in [n] \text{, }x_i \in L    \<  \< \text{Verify using } \oracle_{\texttt{OR}} \text{ that} \forall i \in [n] \text{, }x_i \in L   \\
    (c'_{1,k}, \dots, c'_{{\noisen},k})\< \sendmessageleft*{c'_{j,k} = \Com\Big( v_{j,k}, s_{v_{j,k}}\Big)} \< \forall j \in [\noisen] \text{ Samples and commits } v_{j,k} \in \bit \text{}\\    
    \forall j \in [\noisen]\text{ Send } c'_{j,k} \< 
\oracle_{\texttt{OR}}  \<  \forall j \in [\noisen] \text{ Send openings} (v_{j,k}, s_{j,k})\\    
    \forall j \in [\noisen]\text{ Check } \oracle_{\texttt{OR}}(c'_{j,k}) = 1\<\<  \\
    \forall j \in [\noisen]\text{ Send empty string } \lambda_j \< 
\oracle_{\texttt{Morra}} \< \forall j \in [\noisen]\text{ Send empty string } \lambda_j\\
    \text{ Receive } (b_{1,k}, \dots, b_{\noisen,k}) \< 
 \forall j \in [\noisen] \text{ }b_{j,k} = \oracle_{\texttt{Morra}}(\lambda_j) \text{ } \< \text{ Receive } (b_{1,k}, \dots, b_{\noisen,k})\\  \< \< \forall j \in [\noisen] \text{ Adjust } v_{j,k} \text{ based on }  b_{j,k}, \text{ to get } \hat{v}_{j,k}\\ 
  \< \sendmessageleft*{y_k} \< y_k = \sum_{i=1}^n \share{x_i}_k + \sum_{j=1}^{\noisen} \hat{v}_{j,k} \\
  \< \sendmessageleft*{z_{k}} \<  z_{k} = \Big(\sum_{i=1}^{n}r_{i,k}+ \sum_{j=1}^{\noisen} s_{j,k}\Big) \\  
    \text{Compute } \hat{c}'_{j,k} \text{ using } b_{j,k}  \text{ for all } j \in [b_{\noisen}]\< \<  \\    
  \text{Check that } \Big( \prod_{i=1}^{n} c_{i,k} \times \prod_{j=1}^{\noisen} \hat{c}'_{j,k}\Big) = \Com(y_k, z_{k}) \< \<
}
\end{pchstack}    
    \caption{The figure above describes the interaction between a single prover and verifier in $\Pi_{\texttt{Bin}}$. In the single trusted curator model $K=1$ we have $x_i = \share{x_i}_k$ where the prover can see client inputs in plaintext. In the MPC setting, each prover $\Pv_k$ follows the exact same protocol on their respective inputs specified in Line 2. Thus at the end of the protocol, each prover $\Pv_k$ outputs the tuple $y_k, z_{k}$. A public verifier aggregates the output from each prover to publish verifiable DP statistics. }
    \label{fig:dp_interactive_proof}
\end{figure*}

\subsection{Main Protocol Description}
\label{sec:main_protocol}

The protocol $\Pi_{\bin}$ described in Figure \ref{fig:dp_interactive_proof} provides a compact standalone description of the interaction between $K$ provers and a public verifier for computing $\mathcal{M}_\bin$. As the verifier is public, anyone (even non-participants to $\Pi_\bin$) can see the messages it receives from the clients. 
We assume that both the provers and the verifier have access to oracles $\oracle_{\morra}$ and  $\oracle_{\texttt{OR}}$ as defined in Section \ref{sec: commitments}. In the real world, $\oracle_{\texttt{Morra}}$ is replaced with $\Pi_{\texttt{Morra}}$ (see Algorithm \ref{alg: morra}) and $\oracle_{\texttt{OR}}$ is replaced by Cramer \etal's $\Sigma$-OR proof \cite{cramer1994proofs} (see Appendix \ref{app:sigma_open} for an example implementation) which securely compute the oracle functionalities in the presence of adversaries that may deviate from protocol specifications. Thus, we define our protocol in the hybrid world, and by the sequential composition theorem\footnote{Though we use sequential composition, both protocols $\Pi_{\morra}$ and $\Pi_{\texttt{or}}$ can be parallelly composed.} \cite{goldreich_foundations_2007}, the security properties of the protocol are preserved. 
Next, we describe the protocol in detail with line references to Figure \ref{fig:dp_interactive_proof}:

\begin{enumerate}
    \item[Line 1:]{ In the first step, the prover(s) and verifier agree upon the public parameters for the protocol. The public parameters include a description of $\mathcal{C}_\pp = \G_q, \mathcal{M}_\pp = \mathcal{X} = \mathcal{Y} = \Z_q, \mathcal{R}_\pp= \Z_q$ and a description of $\mathcal{M}_\texttt{bin}$ as defined in equation \eqref{eq:M_bin}. 
    The group $\G_q$ satisfies the requirements of the homomorphic commitment scheme defined in Section \ref{defn:hom_coms}.}

\item[Line 2:]{ For each client $i \in [n]$, let $\share{x_i}_k$ denote the $k$'th share of their input $x_i \in L$. 
Define $c_{i,k} = \Com\Big(\share{x_i}_k,  r_{i,k}\Big)$ as the commitment to the $k$'th share of $x_i$. 
The client sends to each prover $\Pv_k$ the tuple $(\share{x_i}_k,  r_{i,k})$ and publicly broadcasts 
the commitments to their shares $\Big( c_{i,1}, \dots, c_{i,K} \Big)$ to the public verifier observable to all parties.}

\item[Line 3:]{Similar to PRIO and Poplar, we use $L=\bit$, and thus verifier and the client use the oracle $\oracle_{\texttt{OR}}$ to check if the client's input is indeed a commitment to a bit. 
For input $x_i$, the verifier sends to  $\oracle_{\texttt{OR}}$ the derived commitment $c_{i} = \prod_{k=1}^K c_{i,k}$ and the client sends the openings $\Big(x_i, \sum_{k=1}^K r_{i,k}\Big)$. 
The oracle responds with $\oracle_{\texttt{OR}}(c_{i}) = 1$ if $x_i \in \bit$ and $c_i$ is a commitment to $x_i$.
Note that all messages sent to the verifier are public, so the servers can independently validate the verifier's claims. As there is always one honest participant in the protocol (at least one prover or the verifier), this step provides a public record of honest and dishonest clients. 
Such a record resolves the issues presented in Figure~\ref{fig:attacks}. From here on, the protocol only uses inputs from validated clients. }

    \item[Line 4:]{$\Pv_k$ samples $(v_{1,k}, \dots, v_{\noisen,k})$ where $v_{j,k} \in \bit$ (private random bit) and sends to the verifier commitments to $v_{j,k}$ for $j \in [\noisen]$. 
    Let $c'_{j,k} = \Com(v_{j,k}, s_{j,k})$ denote the commitment to $v_{j,k}$ with randomness $s_{j,k}$. To enforce consistency in notation and improve readability, we always use $c$ to denote commitments to client inputs and $c'$ to denote commitments to the prover's private inputs. Similarly, we will always use $r$ and $s$ to denote the randomness used for client input and prover bit commitments, respectively.}    

    \item[Line 5-6:] {The verifier uses $\oracle_{\texttt{OR}}$ to check if the messages sent by the prover were indeed commitments to 0 or 1. 
    This step is essential for the boolean to the arithmetic conversion, as the linearisation of the XOR operation is only valid for values $v \in \bit$ (see completeness property of Theorem~\ref{theorem:verifiable_dp_feasible}).}

    \item[Line 7-8:]{If for any $i \in \noisen$, $\oracle_{\texttt{OR}} = 0$, the verifier aborts the protocol and publicly declares that $\Pv_k$ cheated. Otherwise, once all commitments are verified, the prover and verifier jointly invoke $\oracle_{\morra}$ to get $\noisen$ \textit{public} unbiased bits $(b_{1,k}, \dots, b_{\noisen,k})$.}

    \item[Line 9:]{For all $i \in [\noisen]$, based on the value of $b_{j,k}$, the prover sets $\hat{v}_{j,k}$ as follows
    \[   
    \hat{v}_{j,k} = 
         \begin{cases}
        1 - v_{j,k} & \text{if } b_{j,k}=1\\
        v_{j,k}& \text{otherwise.} \\ 
     \end{cases}
\]
As long as $v_{j,k} \in \bit$, the above set of equations is equivalent to setting $\hat{v}_{j,k} = v_{j,k} \oplus b_{j,k}$. An important feature of this step is that, conditioned on $b_{j,k}$, the operations described above are linear. In Line 11, we describe why this is critical for correctness to hold.
}

    \item[Line 10-11:]{The prover sends $(y_k, z_{k})$ to the verifier:
    \begin{equation}
    y_k = \Big(\sum_{i=1}^n  \share{x_i}_k + \sum_{j=1}^{\noisen} \hat{v}_{j,k} \Big) 
    \end{equation}    
    \begin{equation}
    z_{k} = \Big(\sum_{i=1}^n r_{i,k} + \sum_{j=1}^{\noisen} s_{j,k}\Big)     
    \end{equation}
    
    where $(y_k, z_{k})$ is the output for prover $\Pv_k$
    }

    \item[Line 12:]{Using the common public randomness $\{ b_{j,k} \}_{j \in [\noisen]}$ generated by $\oracle_{\texttt{morra}}$, the verifier updates their view of received commitments as follows:

      \[   
\hat{c}'_{j, k} = 
     \begin{cases}
       \Com(1, 0) \times {c'}_{j, k}^{-1} & \text{ if } b_{j, k}=1\\
       c'_{j, k}&\quad\text{otherwise.} \\ 
     \end{cases}
\]
    Note that $\Pv_k$ never opens ${c'}_{j, k}$, and thus $\Vfr$ never sees $\hat{v'}_{j, k}$ in plaintext. By the hiding property of commitments, an efficient verifier learns nothing about the prover's private values from these messages. However, as the update conditioned on $b_{j, k}$ is linear and $b_{j, k}$ is public, $\Vfr$ can still compute a commitment to $1 - v_{j, k}$ without ever knowing $v_{j, k}$. As a direct consequence, as discussed in the soundness claim, the prover cannot deviate from its prescribed linear operation, as the verifier is able to check it.
    As we will show later, this step guarantees correctness, soundness and security. 
    }
    \item[Line 13:]{Finally, the  verifier checks 
    \begin{equation}
    \label{eq:vfr_check}
     \prod_{i=1}^{n} c_{i,k} \times \prod_{j=1}^{\noisen} \hat{c}'_{j, k} = \Com(y_k, z_k)   
    \end{equation}        
    }    
\end{enumerate}

From these outputs, we can derive the desired result: we treat the $y_k$'s as shares, and calculate $y = \sum_{k=1}^{K} y_k$ as the noisy sum.  
We next show that this protocol achieves our desired properties.

\begin{theorem}
\label{theorem:verifiable_dp_feasible} 
Let $X=(x_1, \dots, x_n)$ be the client input. Let $\mathcal{M}_{\texttt{Bin}}$ and $\oracle = (\oracle_{\morra}, \oracle_{\texttt{OR}})$ be as defined above. 
Assuming $\Pi_{\texttt{Bin}}$ is run with $K \geq 1$ provers and a single verifier, then the following is true

    \item{\textbf{Completeness:} For every $X \in \mathcal{X}^n$ 
     \[ \Pr\left[ \texttt{out}(\Vfr, \vec{\Pv}) = 0 : \begin{array}{c} \pp \leftarrow \texttt{Setup}^\oracle(1^\kappa) \\
     \Pv_k^\oracle \leftarrow \share{X}_k, \bvec{r}_{\Pv_k}, \pp \\
     \Vfr^\oracle \leftarrow z, \bvec{r}_v, \pp \\
y \leftarrow \Pi_\bin^\oracle(\vec{\Pv}, \Vfr)
    \end{array} \right]
  = 0  \] 
  
  where $\share{X}_k = (\share{x_1}_k, \dots, \share{x_n}_k)$ and $\bvec{r}_{\Pv_k}$ denotes $\Pv_k$'s private randomness.
  }

\item{\textbf{Computational Soundness:} For every $X \in \mathcal{X}^n$ and any subset $I \subseteq [K]$, let $\vec{\Pv^*}$ denote the collection of provers, indexed by $I$, that have been corrupted by an adversary $\AdvA$, such that the final output $y \neq \mathcal{M}_\bin(X, Q)$. Let $\vec{\Pv}$ denote the collection of honest provers not indexed by $I$. Let $z$ denote the auxiliary input available to $\AdvA$ and $\mu$ be its advantage in the discrete log game (Definition \ref{defn:discrete_log}) 

  \[ \Pr\left[ \texttt{out}(\Vfr, \vec{\Pv^*}, \vec{\Pv}) = 1 : \begin{array}{c} \pp \leftarrow \texttt{Setup}^\oracle(1^\kappa) \\
     \Pv_k^\oracle \leftarrow \llbracket X \rrbracket_k, \bvec{r}_{\Pv_k}, \pp \\
     \Vfr^\oracle \leftarrow z, \bvec{r}_v, \pp \\
y \leftarrow \Pi_\bin^\oracle\Big((\vec{\Pv}, \vec{\Pv^*}), \Vfr\Big)
    \end{array} \right]
  \leq \mu(\kappa)  \] 
  
Note that as $I \subseteq [K]$, soundness, as defined above covers both the MPC and the trusted curator setting.
}
     
    \item{\textbf{Computational Zero-Knowledge}: 
Let $\vec{\Pv^*}$ denote the collection of provers, indexed by $I \subset [K]$, that have been corrupted by a corrupt verifier $\Vfr^*$. There exists a PPT Simulator $\Sim_{(\Vfr^*, I)}$ such that for all $y = \mathcal{M}(X, Q)$
    \begin{align*}
    \texttt{View}\Bigg[\Pi\Big((\vec{\Pv}, \vec{\Pv^*}), \Vfr^*, \pp\Big)\Bigg] &\stackrel{c}{\equiv} \Sim_{(\Vfr^*, I)}(y, \bvec{r}_v, z, \pp)        
    \end{align*}

    where $z \in \bit^{\texttt{poly}(\kappa)}$ and $\bvec{r}_v \in \bit^{\texttt{poly}(\kappa)}$ represents auxiliary input and randomness available to all the corrupted parties.
    }

\end{theorem}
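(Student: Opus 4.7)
The plan is to prove the three properties separately, since they require different techniques. \textbf{For completeness}, I would show by direct inspection that every verifier check passes deterministically whenever all parties honestly execute $\Pi_{\bin}$. The client-side $\oracle_{\texttt{OR}}$-check (Line 3) succeeds because each $x_i \in \bit$ and the aggregated commitment $\prod_k c_{i,k}$ opens to $(x_i, \sum_k r_{i,k})$ by the homomorphic identity \eqref{eq:hom_coms}. The prover-side $\oracle_{\texttt{OR}}$-check (Lines 5--6) succeeds because each honest $\Pv_k$ samples $v_{j,k}\in\bit$. The boolean-to-arithmetic conversion $\hat{v}_{j,k} = (1-v_{j,k})$ if $b_{j,k}=1$ else $v_{j,k}$ equals $v_{j,k}\oplus b_{j,k}$ precisely because $v_{j,k}\in\bit$. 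Finally, the homomorphic check in Equation~\eqref{eq:vfr_check} follows by expanding: $\prod_i c_{i,k}=\Com(\sum_i\share{x_i}_k,\sum_i r_{i,k})$ and $\hat{c}'_{j,k}=\Com(\hat{v}_{j,k},(-1)^{b_{j,k}}s_{j,k})$, so the product equals $\Com(y_k, z_k)$ for the honestly computed $(y_k,z_k)$.

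\textbf{For computational soundness}, I would reduce to the binding property of $\Com$ (Definition~\ref{defn:binding_com}), which by Definition~\ref{defn:hom_coms} rests on the discrete log assumption. Client commitments $c_{i,k}$ are broadcast publicly to the verifier before the provers act, and the $\oracle_{\texttt{OR}}$-check forces $x_i\in\bit$; by binding, the coalition $\vec{\Pv^*}$ cannot substitute any other value for $\share{x_i}_k$ except with advantage at most $\mu(\kappa)$. Likewise, each $c'_{j,k}$ is committed before $\oracle_{\morra}$ reveals $b_{j,k}$ and passes the bit-check, so the adversary's private bits are fixed independently of the public coins; consequently $\hat{v}_{j,k}=v_{j,k}\oplus b_{j,k}$ is determined. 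The only remaining freedom is $(y_k,z_k)$, but by binding, the pair satisfying Equation~\eqref{eq:vfr_check} is uniquely $(y_k,z_k)=(\sum_i \share{x_i}_k+\sum_j \hat{v}_{j,k},\sum_i r_{i,k}+\sum_j(-1)^{b_{j,k}}s_{j,k})$. Any $y=\sum_k y_k$ produced this way lies in the support of $\mathcal{M}_{\bin}(X,Q)$ for some valid bit choice, so a successful cheat implies a binding collision, bounding $\delta_s$ by the adversary's discrete-log advantage $\mu(\kappa)$.

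\textbf{The main technical obstacle is computational zero-knowledge}. I would build $\Sim_{(\Vfr^*,I)}$ in the $(\oracle_{\texttt{OR}},\oracle_{\morra})$-hybrid model, where $\Sim$ is free to program both oracles' responses. Given $y=\mathcal{M}_{\bin}(X,Q)$, the adversary's randomness $\bvec{r}_v,z$, and the corrupted provers' $\Delta_k$ (which $\mathcal{M}_{\bin}$ reveals to each $\Pv_k\in\vec{\Pv^*}$), the simulator first fixes target noise shares for honest provers: it samples $\Delta_k\sim\Bin(\noisen,1/2)$ independently for all but one honest index $k^\star$ and sets $\Delta_{k^\star}$ so that $\sum_k(X_k+\Delta_k)=y$. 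On behalf of each honest $\Pv_k$, $\Sim$ commits to arbitrary bits $v_{j,k}$ with fresh $s_{j,k}$ and honestly answers $\oracle_{\texttt{OR}}$ (the commitments really open to bits), then programs $\oracle_{\morra}$ to output $b_{j,k}$ satisfying $\sum_j v_{j,k}\oplus b_{j,k}=\Delta_k$, which is feasible because $\Sim$ chose the $v_{j,k}$. Finally $\Sim$ releases $(y_k=X_k+\Delta_k,\,z_k=\sum_i r_{i,k}+\sum_j(-1)^{b_{j,k}}s_{j,k})$, and Equation~\eqref{eq:vfr_check} holds by construction. Indistinguishability is established by a hybrid argument replacing honest-prover commitments one at a time with simulated ones; each hop is justified by the hiding property (Definition~\ref{defn:hiding_com}), and all other transcript components are identically distributed.

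The delicate step is ensuring that the joint distribution of the simulated honest $\Delta_k$'s matches the real one. In the ideal world the $\Delta_k$ are independent binomials, but the simulator must force $\sum_k(X_k+\Delta_k)=y$, so one honest share is a deterministic function of the others. This exactly mirrors the information the adversary already learns in the ideal world: $\mathcal{M}_{\bin}$ reveals $\Delta_k$ to each corrupted $\Pv_k$ and the final $y$ to everyone, so conditioning on $(y,\{\Delta_k\}_{k\in I})$ the distribution over honest $\Delta_k$'s is precisely what $\Sim$ produces. Composing the three parts and invoking the sequential composition theorem for the hybrid model (replacing $\oracle_{\morra}$ by $\Pi_{\morra}$ and $\oracle_{\texttt{OR}}$ by the $\Sigma$-OR protocol) yields Theorem~\ref{theorem:verifiable_dp_feasible}.
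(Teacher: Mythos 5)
Your completeness argument matches the paper's and your soundness reduction to the binding property (and hence to discrete log) is essentially the paper's case analysis; both parts are fine, and your observation that the randomness in $\hat{c}'_{j,k}$ picks up a sign $(-1)^{b_{j,k}}$ is a useful clarification of a notational shortcut in Line~11.

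There is, however, a genuine gap in your zero-knowledge simulator. You have $\Sim$ sample $\Delta_k\sim\Bin(\noisen,\tfrac12)$ for honest provers, set $\Delta_{k^\star}$ so that $\sum_k(X_k+\Delta_k)=y$, program $\oracle_\morra$ to match $\Delta_k$, and finally release $y_k=X_k+\Delta_k$ and $z_k=\sum_i r_{i,k}+\sum_j(-1)^{b_{j,k}}s_{j,k}$. But for an honest prover $k^\star$, $\Sim$ has \emph{neither} $X_{k^\star}=\sum_i\share{x_i}_{k^\star}$ \emph{nor} the client randomness $\sum_i r_{i,k^\star}$ on its input tape: these belong to an uncorrupted prover, and the broadcast commitments $c_{i,k^\star}$ are hiding precisely so that no one else can recover them. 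So $\Sim$ cannot solve for $\Delta_{k^\star}$, cannot program $\oracle_\morra$ toward it, and above all cannot produce a $z_{k^\star}$ satisfying the Line~13 check $\prod_i c_{i,k^\star}\times\prod_j\hat{c}'_{j,k^\star}=\Com(y_{k^\star},z_{k^\star})$, because doing so with genuine bit commitments $c'_{j,k^\star}$ would require opening $\prod_i c_{i,k^\star}$, which is infeasible. Programming Morra changes what the commitments sum to, but it does not grant access to the client randomness buried in $c_{i,k^\star}$.

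The paper sidesteps exactly this obstruction with a different use of hiding: $\Sim$ computes only $y_{k^\star}=y-\sum_{k\neq k^\star}y_k$ (no $X_{k^\star}$ or $\Delta_{k^\star}$ needed), draws a fresh $z_{k^\star}$, produces all but one prover commitment as honest commitments to $1$, and then fixes the remaining commitment $c'_{1,k^\star}$ \emph{algebraically} as the unique group element forcing the verification equation to hold. That element is still a valid commitment to a bit (since $\G_q$ is cyclic), so $\oracle_{\texttt{OR}}$ returns $1$, but $\Sim$ can never open it and never needs to; hiding ensures the adversary cannot tell the fudge commitment from a genuine one. To repair your proof, replace the ``program Morra toward $\Delta_k$'' step with this algebraic fudge-commitment construction, keep the Morra coins sampled uniformly, and then your hybrid-over-hiding argument goes through.
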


\begin{proof}
\
\begin{enumerate}
\item{\textbf{Completeness:} 
By the definition of $\oracle_{\texttt{morra}}$, $(b_{1,k}, \dots, b_{\noisen,k})$ are all unbiased bits. 
As per $\Pi_{\texttt{Bin}}$, when $b_{j,k}=1$, $\hat{v}_{j,k}=1-v_{j,k}$ and when $b_{j,k}=0$, $\hat{v}_{j,k} = v_{j,k}$. 
We know that an honest prover is guaranteed to have sampled a private value $v_{j,k} \in \bit$ for all $j \in [\noisen]$. 
Thus the case-wise arithmetic operation described above is equivalent to setting $\hat{v}_{j,k} = v_{j,k} \oplus b_{j,k}$. 
This implies that for each server $\hat{v}_{j,k} \xleftarrow{R} \bit$ and $\sum_{j=1}^{\noisen} \hat{v}_{j,k} \sim \texttt{Binomial}(\noisen, 1/2)$. 
The output of each honest prover is thus $y_k = \texttt{Binomial}(\noisen, 1/2) + \sum_{i=1}^n \share{x_i}_k$. 
By linearity of secret-sharing, $\sum_{k \in [K]} y_k = \mathcal{M}_{\bin}(X, Q)$ where  $\mathcal{M}_{\bin}$ is defined in equation \eqref{eq:M_bin}. 
} 
\item{\textbf{Soundness:}
Beyond exiting the protocol early (which is trivially detected), an adversary $\AdvA$ controlling a collection of dishonest provers could force a prover to cheat by doing at least one of the following:

\begin{enumerate}

    \item{ (Cheat at Line 4): For any $j \in [\noisen]$, $c'_{j,k}$ is not a commitment to a bit. As the verifier has access to oracle $\oracle_{\texttt{OR}}$, it would detect this immediately. Thus we can be guaranteed that $c'_{j,k}$ are commitments to 1 or 0.}

    \item{(Cheat at Line 7): The prover could sample improper public randomness. 
    However, this is impossible as the verifier and prover jointly use $\oracle_{\texttt{Morra}}$ to generate randomness.}
    
    \item{(Cheat at Line 10): Output messages $(y_k^\prime \neq y_k$, $z_{k}^\prime \neq z_{k})$. 
    If the verifier check from (Line 12) fails then the verifier knows $\Pv_k^*$ cheated. If 
      $\Com(y_k, z_{k}) = \prod_{i=1}^{n} c_{i, k}\times \prod_{j=1}^{\noisen} \hat{c}'_{j,k} = \Com(y_k^\prime, z_{k}^\prime)$, then $\AdvA$ has broken the binding property of the commitment scheme. As $\AdvA$ has negligible success in winning the discrete log game, it has a negligible chance at breaking the commitment scheme. 
    }
\end{enumerate}

These are the only places where the $\Pv^*$ sends a message to the $\Vfr$ and thus we have our result.
}
\item{\textbf{Zero-Knowledge:}
To prove zero knowledge we need to explicitly define the commitment scheme we are using. We use Pedersen Commitments which are defined as follows 
\begin{equation}
\label{eq:ped_com}
    \Com(x, r) = g^xh^{r}
\end{equation}

where $\mathcal{R}_\pp = \mathcal{M}_\pp= \Z_q$ and $\mathcal{C}_\pp = \G_q$ an abelian group where the discrete log problem is hard. To enhance readability, we will prove security for  $K=2$ provers and one verifier, but the result trivially generalises to $K \geq 2$ provers. To avoid confusion between the MPC and single curator setting, we defer the simpler security proof for single curators to Appendix \ref{app:single_curator_sec_proof}. 
Without loss of generality, assume that the verifier $\Vfr^*$ and $\Pv_1$ have been corrupted by a PPT adversary $\AdvA$ and that $\Pv_2$ is honest. 
$\Sim$ receives on its input tape the inputs for $\Pv_1$ and $\Vfr^*$. The ideal oracle functionality $\mathcal{M}_\bin$ is defined as before. Let $\Sim$ denote shorthand for $\Sim_{\Vfr^*, \Pv_1}$.  We construct the simulator as follows:

\begin{enumerate}
    
    \item{$\Sim$ receives the public messages $\Bigg\{  \Big\{ c_{i,k}  \Big\}_{k \in [K]} \Bigg\}_{i \in [n]}$ and sets $c_{i} = \prod_{k=1}^K c_{i,k}$.}

    \item{$\Sim$ internally invokes $\Pv_1$ to receive inputs $X_1$. If $\Pv_1$ was honest then $X_1 = \sum_{i=1}^n \llbracket x_i \rrbracket_1$. Of course, we have no control over $\AdvA$, and $X_1$ could be any arbitrary value. The definition of security requires that we prove security using the actual inputs used by the real-world adversary $\AdvA$ and not the ones it was handed to at the start of the protocol.}

    \item{$\Sim$ invokes $\mathcal{M}_{\texttt{bin}}$ with input $X_1$ and receives $(y, \Delta_1)$ as defined in equation \eqref{eq:M_bin}. Note $\Sim$ never has access to the honest party's input $X_2$ nor the randomness $\Delta_2$ used by $\Pv_2$ in the real protocol. It must simulate the messages and output of the real protocol from just its input and the output it receives from the ideal model.}

    \item{$\Sim$ sets $y_1= X_1 + \Delta_1$ and computes $y_2 = y - y_1$, which by the definition of $\mathcal{M}_\bin$, is equal to $(X_2 + \Delta_2)$.}

    \item{$\Sim$ samples $z_2 \xleftarrow[]{R} \mathcal{R}_\pp$ and sets $c_{2} = \Com(y_2, z_{2})$.}

    \item{$\Sim$ samples $c'_{2,2}, \dots, c'_{{\noisen},2}$ such that $c'_{j,2} = \Com(1, s_{j,2})$ where $s_{j,2} \xleftarrow[]{R} \mathcal{R}_\pp$. 
     It sets $c'_{1,2} = g^1a_2$ where $a_2 = c_{2} \times  \Big( \prod_{j=2}^{\noisen} \hat{c}'_{j,2}\Big)^{-1} \times \Big( \prod_{i=1}^{n} c_{i,2}\Big)^{-1} \times g^{-1}$. 
     Notice that $\Sim$ is actually unable to open $c'_{1,2}$ but is never required to do so, as opening a commitment to a private value violates DP. The only information $\AdvA$ can check is if $c'_{1,2}$ is a commitment to a bit, which it is. Thus the simulator artificially constructs a set of commitments that align like the real-world protocol, without having the slightest idea what the randomness used by $\Pv_2$ actually was. It is able to do so due to the hiding property of the commitment scheme.}
    
     \item{$\Sim$ sends over $\{c_{j,2}\}_{j \in [\noisen]}$ to $\AdvA$ pretending to be the honest prover (Line 4 of Figure \ref{fig:dp_interactive_proof}).}
	    
    \item{$\Sim$ pretends to be the prover and jointly invokes $\oracle_{\texttt{Morra}}$ with $\AdvA$ to sample $\noisen$ unbiased public bits $(b_{1,2}, \dots, b_{\noisen,2})$.}
    
    \item{Finally $\Sim$ sends $y_2$ and $z_{2}$ to $\AdvA$ and outputs whatever $\AdvA$ outputs.}\end{enumerate}}\end{enumerate}\end{proof}

\section{Separation Under Verifiable DP}
\label{sec:separation}

We show that information theoretic verifiable DP is impossible in the trusted curator model.
To prove our result stated in Theorem~\ref{theorem:coms_imply_vdp}, we rely on the impossibility of secure coin flipping by \cite{haitner2014coin}.

\begin{theorem}[Impossibility Of Tossing A Fair Coin]\cite{haitner2014coin} 
 \label{theorem:coin_flip_implies_one_way_functions} Let $(\Pv, \Vfr)$ be a coin tossing protocol and let $B_\lambda = \mathbb{E}[\texttt{out}(\Pv, \Vfr)(1^\kappa)]$ be the bias of the output of such a protocol. Assuming that one-way-functions do not exist, then for any $g \in \texttt{poly}(\kappa)$, there exists a pair of efficient cheating strategies $\Pv^*$ and $\Vfr^*$ such that the following holds: for infinitely many $\kappa$'s, for each $j \in \bit$ either $\Pr[\texttt{out}(\Pv^*, \Vfr)(1^\kappa) = j]$ or $\Pr[\texttt{out}(\Pv, \Vfr^*)(1^\kappa)= j]$ is greater than $\sqrt{B_\kappa^j} - \frac{1}{g(\kappa)}$, where $B_\kappa^1 = B_\lambda$ and $B_\kappa^0 = 1 - B_\lambda$. In particular for $B_\lambda=\frac{1}{2}$, the corrupted party can bias the outcome by almost $\frac{1}{\sqrt{2}} - \frac{1}{2}$.
\end{theorem}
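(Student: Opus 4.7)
My plan is to follow the standard ``cheat-by-biasing'' blueprint for coin-flipping lower bounds: assume one-way functions do not exist, use this to efficiently estimate the conditional bias of the output on any partial transcript, and then convert those estimates into an explicit cheating strategy that shifts the outcome toward a desired bit.

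First I would model the interaction $(\Pv, \Vfr)$ as a protocol tree whose internal nodes are labelled by the party about to speak and whose leaves are labelled with $\texttt{out} \in \bit$. For each transcript prefix $t$ let $p_t = \Pr[\texttt{out} = 1 \mid \text{transcript reaches } t]$, so that $p_\emptyset = B_\kappa$ by hypothesis. Because we are assuming OWFs do not exist, an Impagliazzo--Luby-style argument yields an efficient ``random continuations'' estimator $\tilde{p}_t$ for $p_t$ with additive error at most $1/g(\kappa)$: the estimator efficiently samples many honest completions of $t$ using a distributional inverter for the parties' joint next-message distribution, then averages the resulting outcome bits. With this estimator in hand, the cheaters are the natural greedy ones. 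The boosting prover $\Pv^*$, targeting $j=1$, evaluates $\tilde{p}_{(t,m)}$ over its candidate next messages $m$ and sends the one maximizing this value; the symmetric cheating verifier $\Vfr^*$, targeting $j=0$, always picks the message minimizing $\tilde{p}_{(t,m)}$. Both are polynomial time under the no-OWF assumption, and the construction dualizes cleanly to target $j=0$ from the prover or $j=1$ from the verifier.

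The technical core, and the main obstacle, is the analysis: for each $j$ one must show that at least one of $\Pr[\texttt{out}(\Pv^*, \Vfr) = j]$ and $\Pr[\texttt{out}(\Pv, \Vfr^*) = j]$ exceeds $\sqrt{B_\kappa^j} - 1/g(\kappa)$. My plan is to view $p_v$ along a root-to-leaf path as a martingale under honest play and compare it to what the greedy cheaters extract: at an honest-party node $p_v$ equals the average of its children's biases, while at a cheater node the cheater jumps to the extremal child. An AM--GM / Cauchy--Schwarz step then trades the prover's multiplicative gain against the verifier's, and since a random honest leaf is labelled $1$ with probability $B_\kappa^1$ and $0$ with probability $B_\kappa^0$, the better of the two extremal attacks inherits the geometric-mean bound $\sqrt{B_\kappa^j}$. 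Finally, the estimation error in $\tilde{p}_t$ accumulates only additively across the polynomially many rounds, so it is absorbed into the $1/g(\kappa)$ slack by choosing the number of continuation samples polynomially large in $\kappa$ and $g$. The hardest step I anticipate is making the martingale/geometric-mean inequality tight enough to recover a clean $\sqrt{B_\kappa^j}$ rather than a weaker $\Omega(B_\kappa^j)$ bound, which is precisely where the Haitner--Omri analysis invests most of its effort.
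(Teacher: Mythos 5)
The paper does not contain a proof of this statement: it is cited verbatim from Haitner and Omri (``Coin Flipping with Constant Bias Implies One-Way Functions''), used as a black box in the proof of Theorem~\ref{theorem:coms_imply_vdp}. So there is no in-paper argument to compare against, and the comparison must be against the cited work.

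On its own terms, your sketch starts correctly: under the no-OWF assumption one can, via an Impagliazzo--Luby-style distributional inverter, efficiently sample honest continuations of any transcript prefix and thereby approximate conditional outcome probabilities. That much matches Haitner--Omri's first stage. The attack and analysis you build on top of it, however, are the wrong ones. A greedy cheater that at each of its nodes jumps to the extremal child $\arg\max_m \tilde{p}_{(t,m)}$, analyzed by treating $p_t$ as a martingale and trading gains between the two parties via AM--GM/Cauchy--Schwarz, is exactly the Cleve--Impagliazzo blueprint; that analysis inherently yields a \emph{round-dependent} bias of order $1/\sqrt{r}$ for $r$-round protocols, which vanishes as $r$ grows and cannot reach the stated round-independent bound $\sqrt{B_\kappa^j} - 1/g(\kappa)$. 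The obstacle you flag at the end (``recover a clean $\sqrt{B_\kappa^j}$ rather than a weaker bound'') is not a tightening of your route but evidence that the route is the wrong one. Haitner--Omri instead use the \emph{biased-continuation attack}: at a cheater node the attacker does not pick the extremal child but \emph{samples} a message from the honest next-message distribution conditioned on the final output being $j$, i.e.\ with weight proportional to $\Pr[m\mid t]\cdot p_{(t,m)}$, again implemented approximately with the inverter. A single biased continuation does not suffice either; the $\sqrt{\cdot}$ constant comes from applying this attack \emph{recursively} (the $k$-fold biased continuation) and analyzing the resulting sequence of attacked protocols via a fixed-point/dominance argument on the induced biases, not via a per-round martingale increment. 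Without replacing the greedy step by recursive biased continuation and replacing the AM--GM/martingale accounting by that fixed-point analysis, the proof as proposed would establish only the weaker Cleve-style bound, not the theorem in question.
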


The theorem above states that it is impossible for two unbounded parties to jointly sample an unbiased public coin. The result is stronger than the impossibility result by Cleve \cite{cleve1986limits}, which states that its impossible to jointly flip an unbiased coin if we allow parties to exit early. 
The theorem above states that it's impossible even if we guarantee no party exists the protocol early.

\begin{theorem}[Information Theoretic Verifiable DP is impossible]  
\label{theorem:coms_imply_vdp}  
Any constant round interactive protocols $\Pi$ for an DP-mechanism $\mathcal{M}_\bin$ that satisfies Verifiable-DP (Definition \ref{defn:vdp_MPC}) cannot have unconditional soundness and statistical zero-knowledge.
\end{theorem}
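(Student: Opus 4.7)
The plan is to prove the contrapositive by reduction to the impossibility of fair coin tossing (Theorem~\ref{theorem:coin_flip_implies_one_way_functions}). Suppose, for contradiction, that there is a constant-round protocol $\Pi$ for $\mathcal{M}_\bin$ enjoying both unconditional soundness and statistical zero-knowledge. I will bootstrap $\Pi$ into a two-party coin-tossing protocol whose output bias towards $1$ can be pushed noticeably above $1/2$ by neither party, contradicting the cited theorem.

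First, I would construct the coin-tossing protocol. Instantiate $\Pi$ in the trusted-curator mode ($K = 1$) on the fixed input $X = (0, \ldots, 0) \in \Z_q^n$, so that $\mathcal{M}_\bin(X, Q)$ is distributed exactly as $\texttt{Binomial}(n_b, 1/2)$. Alice plays the role of the prover $\Pv$, Bob plays the verifier $\Vfr$, and at the end of the interaction both parties output the bit $c := y \bmod 2$ if the verifier accepts, and $c := 0$ if it rejects. For any $n_b \geq 1$, the binomial expansion of $(1 - 1)^{n_b} = 0$ gives $\Pr[\texttt{Binomial}(n_b, 1/2) \text{ is odd}] = 1/2$, so completeness of $\Pi$ yields $\Pr[c = 1] = 1/2$ in an honest run; in the notation of Theorem~\ref{theorem:coin_flip_implies_one_way_functions} this means $B_\kappa^1 = 1/2$.

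Next, I would rule out either party's ability to push $\Pr[c = 1]$ noticeably above $1/2$. For an unbounded cheating verifier $\Vfr^*$, statistical zero-knowledge supplies a simulator $\mathsf{Sim}$ with $\texttt{View}\bigl[\Pi(\Pv, \Vfr^*)\bigr] \stackrel{s}{\equiv} \mathsf{Sim}(y, \bvec{r}_v, z, \pp)$, where $y$ is drawn externally from $\mathcal{M}_\bin(X, Q) = \texttt{Binomial}(n_b, 1/2)$. Projecting both sides onto the output coordinate, a postprocessing step that can only shrink statistical distance, the real output distribution sits within negligible statistical distance of $\texttt{Binomial}(n_b, 1/2)$, whose parity is exactly unbiased; hence $|\Pr[c = 1] - 1/2|$ is negligible in $\kappa$ no matter what Bob does. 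For an unbounded cheating prover $\Pv^*$, unconditional soundness guarantees that any deviation that pushes the output distribution away from $\mathcal{M}_\bin(X, Q)$ forces the verifier to reject except with negligible probability. The only residual freedom is to abort the protocol, which by the convention above can only drive $c$ towards $0$; hence $\Pr[c = 1] \le 1/2 + \mathrm{negl}(\kappa)$ for every cheating prover strategy.

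The main obstacle---and the reason we must invoke Haitner--Omri rather than Cleve's weaker $\Omega(1/r)$ bound---is precisely this one-sidedness: both strands only control deviations towards~$1$. But this is exactly what Theorem~\ref{theorem:coin_flip_implies_one_way_functions} forbids: taking $j = 1$ so that $B_\kappa^1 = 1/2$, for every polynomial $g$ and infinitely many $\kappa$ some cheating strategy (either Alice's or Bob's) must achieve $\Pr[c = 1] \geq \sqrt{1/2} - 1/g(\kappa)$, a constant strictly exceeding $1/2 + \mathrm{negl}(\kappa)$. This contradicts the two bounds established above, so no such $\Pi$ exists: unconditional soundness and statistical zero-knowledge cannot simultaneously hold for a constant-round verifiable DP protocol for $\mathcal{M}_\bin$.
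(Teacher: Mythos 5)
Your reduction to coin‑flipping is the right instinct, and in broad spirit matches what the paper does (the paper also routes through the infeasibility of fair coin‑tossing), but as written there is a genuine gap in the final step: you derive the contradiction by invoking Theorem~\ref{theorem:coin_flip_implies_one_way_functions}, yet that theorem's conclusion (``some efficient cheating strategy biases the coin'') holds only \emph{under the assumption that one‑way functions do not exist}. Your two bounds are fine and unconditional --- statistical zero‑knowledge prevents an unbounded verifier from skewing the parity of $y$, and unconditional soundness forces an unbounded prover to either play honestly or abort to~$0$ --- so you have correctly shown that no cheater (bounded or not) can push $\Pr[c=1]$ above $1/2 + \mathrm{negl}$. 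But combining this with Haitner--Omri only yields ``if OWFs do not exist, then $\Pi$ does not exist,'' equivalently ``if $\Pi$ exists, then OWFs exist.'' That is not a contradiction; OWFs presumably do exist. The theorem, however, asserts an \emph{unconditional} impossibility.

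To close the gap you need an argument that rules out a coin‑flipping protocol fair against \emph{unbounded} adversaries toward~$1$, with no cryptographic hypothesis. The paper reaches this through a different unconditional fact: any $\Pi$ with your two properties would implicitly yield a commitment scheme that is simultaneously statistically binding (from unconditional soundness) and statistically hiding (from statistical ZK), and such commitments do not exist information‑theoretically --- that incompatibility needs no assumption about OWFs. In your framing, the analogous missing step is an information‑theoretic impossibility for one‑sided fair coin‑tossing against unbounded parties (note Cleve's $\Omega(1/r)$ is unconditional but, as you correctly observe, two‑sided and so consistent with your abort‑to‑$0$ convention). Either import such an unconditional lemma or restructure around the commitment incompatibility; citing Theorem~\ref{theorem:coin_flip_implies_one_way_functions} alone does not finish the proof.
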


\begin{proof}
%

Verifiable DP requires that a verifier be able to guarantee that the randomness generated by a prover remains unbiased, without the verifier ever seeing the randomness. Theorem \ref{theorem:coin_flip_implies_one_way_functions}, states that it is impossible for two unbounded parties to even jointly sample a \textit{public} unbiased coin without assuming one way functions. 
Thus commitment schemes are both necessary and sufficient to jointly sample an unbiased public coin. \par
The task of jointly sampling unbiased \textit{private} randomness is harder. If two parties could sample unbiased private randomness, then they could just use the same protocol to sample unbiased public randomness, by revealing the randomness.
Thus, commitment schemes are a necessary condition for verifiable DP.
Commitments cannot be both statistically binding and hiding, thus unbounded soundness and statistical zero-knowledge is impossible.
\end{proof}

\parag{Connection With Open Problem}
\begin{definition}[$\alpha$-useful mechanism]
  \label{defn:utility} Fix $\alpha \in [0,1]$. Let $u : \mathcal{X}^n \times \mathcal{Y} \rightarrow \in \bit$  be an efficiently computable deterministic function. A mechanism $\mathcal{M}$ is $\alpha$-useful for a utility function $u$ if for some $Q \in \mathcal{Q}$ and for all $X \in \mathcal{X}^n$
  \begin{equation}
  \Pr_{y \leftarrow \mathcal{M}(X, Q)}[u(X, y) = 1] \geq \alpha  
  \end{equation}
  \end{definition}

In his survey on the complexity of DP, Vadhan \cite{vadhan2017complexity} asks the following question. Given $X \in \mathcal{X}^n$ and a differentially private mechanism $\mathcal{M}: \mathcal{X}^n \times \mathcal{Q} \rightarrow \mathcal{Y}$, is there an efficient utility function $u$ that is $\alpha$-useful when $\mathcal{M}$ is IND-CDP but not when $\mathcal{M}$ is information-theoretically DP. 
Groce \etal \cite{groce2011limits} show that if the range of $u$ is in $\mathcal{R}^n$ and the utility is measured in terms of the $\mathcal{L}_p$-norm, then statistical-DP and computational DP are equivalent. Thus for the separation to hold, the range of $u$ must have a more complex structure, such as a graph, a circuit or a proof. 
Bun \etal corroborate this result by describing a utility function such that $u$ is infeasible (not impossible) when $\mathcal{M}$ is statistical DP and efficient when $\mathcal{M}$ is computational DP~\cite{bun2016separating}. 
Similar to our definition of verifiability, their utility function $u$ is cryptographic and unnatural from a data analysis point of view. Specifically, given $y = \mathcal{M}(X, Q)$, Bun \etal, define the utility as the answer to the question of whether $y$ is a valid zap proof \cite{dwork2000zaps} of the statement ``there exists a row in $X$ that is a valid message signature pair''. 
 Meanwhile, we define our utility function as an interactive proof, that checks whether the real protocol output $y$, is indistinguishable from the output of an ideal run of $\mathcal{M}$. 
 In Theorem \ref{theorem:coms_imply_vdp}, we show that verifiable DP is impossible in the presence of computationally unbounded adversaries. 
This provides a candidate for a separation between statistical DP and computational DP.
 
 However, there are some key differences between our formulation of utility and how it was originally posed.
 For example, in Bun \etal, the utility function $u$ is a deterministic non-interactive function that receives the output $y$ and a dataset $X$ of message-signature pairs. The task of evaluating utility is separate from the task of computing DP statistics.
 In verifiable DP, both the DP statistic and utility are computed simultaneously via a constant round interactive protocol. 
 Furthermore, the number of rounds of the utility function is a function of the privacy parameter $\epsilon$. 
 Another point of difference is that, in verifiable DP, the verifier performs the dual role of evaluating the utility of the mechanism and generating randomness that prevents a curator from cheating (although it does not ever see this randomness). In Bun \etal, the verifier's task is just to verify the proof. 
 They are not involved in generating the DP noise.
 Although we show that information theoretic verifiable DP is impossible, our definitions allow the adversary more agency. Thus the two settings are not directly comparable. 
 We defer finding stronger connections between verifiable DP and finding a utility function that separates DP as per  \cite{vadhan2017complexity} to future work.

\section{Performance}
\label{sec:expts}

This section quantifies the computational cost of $\Pi_\bin$, our protocol for computing verifiable DP counting queries. 
In the trusted curator model, the non-verifiable protocol simply involves summing over $n$ inputs, sampling one draw of Binomial noise and aggregating the results.
 Meanwhile, verifiable DP requires computing commitments for $\noisen$ private coins, sending the verifier a non-interactive OR proof (as described in Appendix \ref{app:sigma_open}) that the messages are commitments to either zero or one, playing $\noisen$ rounds of $\Pi_{\morra}$ and performing exponentiation operations to check if the prover messages align.  
In our experiments, we instantiate the commitment scheme using Pedersen Commitments (PC) \cite{pedersen1991non} 
and consider two choices for the prime order group $\G_q$ for PC. 
In what follows, we adopted  $\G_q \subset \Z_p^*$ based on the finite field discrete log problem\footnote{Implemented using \url{https://docs.rs/openssl/latest/openssl/bn/struct.BigNum.html}}. 
We also implemented Pedersen commitments over elliptic curves using the prime order Ristretto 
group\footnote{\url{https://doc.dalek.rs/curve25519_dalek/ristretto/struct.RistrettoPoint.html}}, which gave slower results. 
A single exponentiation operation on an 8 core Apple M1 Mac took 35 $\mu s$ for $\G_q \subset \Z_p^*$ and 328 $\mu s$ over Curve25519.  
 All results and plots are reproducible using code found at \url{https://anonymous.4open.science/r/Verifiable-Differential-Privacy-0407/README.md}.

 \begin{table}[]
 \caption{The table above benchmarks the latency of each of the different stages of $\Pi_\bin$ for computing single dimension counting queries with parameters $n=10^6, \epsilon=1.25, \delta= 2^{-10} $.  Setting $\epsilon=0.88$ results in $\noisen=262144$ private coins for DP.  }
\label{tab:pipeline}
\begin{tabular}{|l|l|l|l|l|}
\hline
$\Sigma$-proof & $\Sigma$-verification & Morra& Aggregation  & Check     \\ \hline
6609 ms        & 6708 ms               & 4987 ms                        & 198 ms                & 263 ms \\ \hline
\end{tabular}
\end{table}

 \begin{figure}[t]
     \centering    
     \includegraphics[scale=0.9]{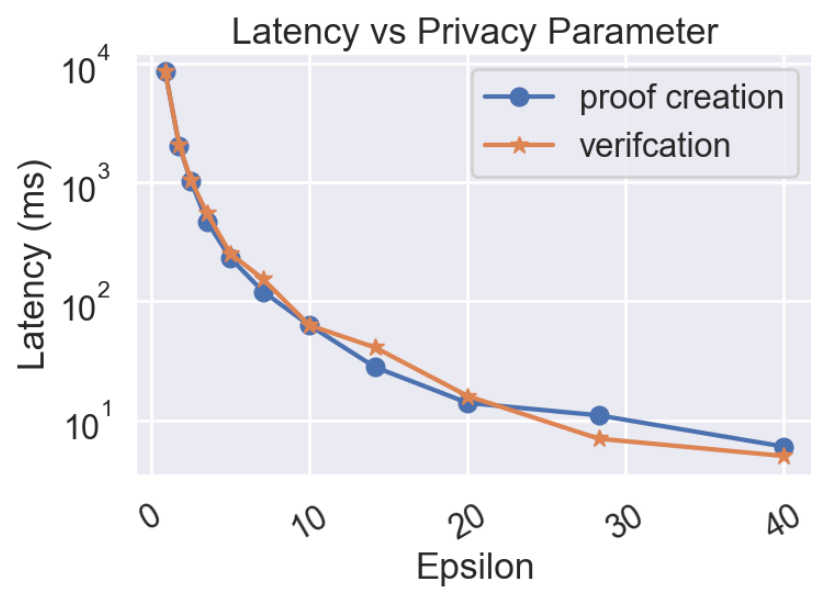}
     \caption{The figure above describes the latency of the two most expensive operations in $\Pi_\bin$ - the time it takes to prove and validate that the prover's private value is indeed a bit. The smaller the privacy parameter $\epsilon$ (high privacy), the more private coins $\noisen$ are needed to provide DP. }
     \label{fig:esp_vs_coins}
 \end{figure}

\parag{Time cost for server verification}
In Table \ref{tab:pipeline} we describe the latency of the different phases of $\Pi_\bin$. The protocol's main computational bottleneck is verifying that the commitments to the prover's private values are indeed commitments to bits. 
Thus most of the time is spent creating and verifying non-interactive $\Sigma$-proofs. 
The aggregation column describes how long the prover takes to aggregate $n$ elements of size $\kappa$ bits, and the check column describes the time it takes the verifier to compute commitments to check that the prover's messages align with the commitments.
 The Morra column describes the time it takes to sample $\noisen$ public coins using $\Pi_\morra$.
 We measure time spent doing local computations, and do not include time spent to communicate over the network.
As working with the $\Sigma$-proof is our main bottleneck, 
Figure~\ref{fig:esp_vs_coins} describes how proof creation and verification latency scales with the privacy parameter $\epsilon$.  Note that for high privacy settings (small values of $\epsilon$), the prover(s) need to generate more private coins to ensure indistinguishability.
Specifically, the number of coins ($n_b$) is proportional to $1/\epsilon^2$ (Lemma~\ref{theorem:dp_guarantee}), and the time cost is then linear in $n_b$. 

\parag{Time cost for client verification}
 Clients submit secret shares of their inputs in the MPC setting. Thus the servers must verify that the client inputs are valid. For $M$-dimensional DP-histogram estimation, the client inputs are restricted to one-hot encoded vectors of size $M$. As discussed in Section \ref{sec:client_verification}, the sketching techniques used in PRIO and Poplar allow servers to verify clients with information-theoretic security but are vulnerable to attack by malicious servers.
  Our use of $\Sigma$-OR-protocols can defend against such attacks, but it comes at a higher computational cost due to its reliance on public key cryptography. 
 Figure~\ref{fig:client_verification} benchmarks the increase in latency as a function of the number of input dimensions ($M$) of the secret shares. 
 In both cases, the cost grows with the dimensionality of the input.  The cost of making client verification robust to malicious servers is approximately an order of magnitude. 
 The $\Sigma$ protocol for verification can be run on each input dimension in parallel, and thus computation can be sped up using more cores.
 However, this increases the communication bandwidth of the protocol.

 \begin{figure}[t]
     \centering    
     \includegraphics[scale=0.9]{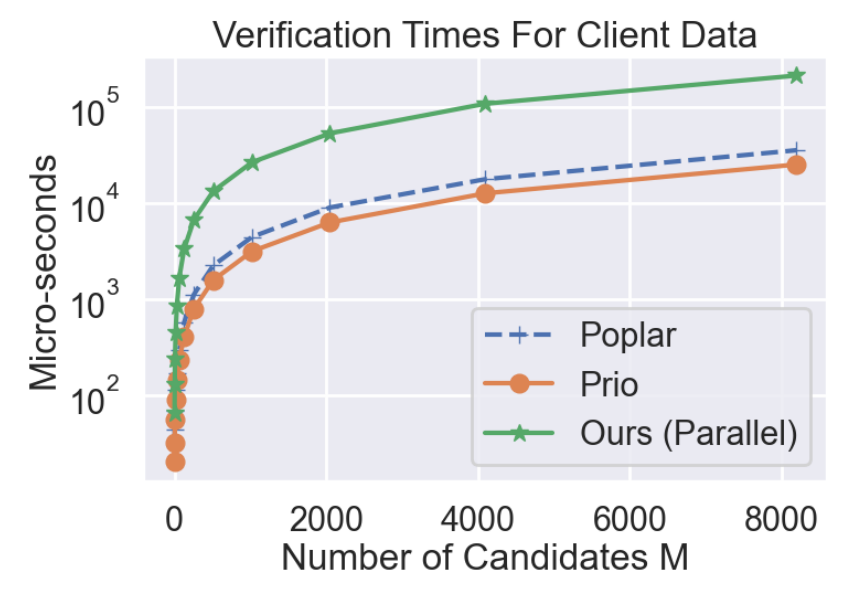}
     \caption{The figure above compares the times it takes to verify if a single client input is valid.  PRIO and Poplar use lightweight sketching protocols and general-purpose MPC to check in zero knowledge whether a client's input is a one-hot vector. Instead, we use the $\Sigma$-OR proof to check if each coordinate of a client is legal.}
     \label{fig:client_verification}
 \end{figure}

\pdfoutput=1

\section{Related Work}
\label{sec:related_work}
\UseRawInputEncoding


\begin{table*}[]
\centering
\caption{Summary of efforts MPC computation of aggregate DP statistics. 
The active security column describes if the protocols allowed participants to deviate arbitrarily. The Central DP column describes if the protocol output satisfies constant DP error independent of the number of clients participating in the protocol. The auditable property describes if the final output can be verified for correctness. Some interactive protocols leak additional information (such as prefix information about client input bits) beyond just the DP output. The leakage column describes if the prescribed protocols suffered from additional leakage.}
\label{table:related_work}
\begin{tabular}{@{}lllll@{}}
\toprule
Protocol                                   & Active Security                                            & Central DP                        & Auditable                         & Zero Leakage                           \\ \midrule
\multicolumn{1}{|l|}{Cryptographic RR \cite{ambainis2004cryptographic}}  & \multicolumn{1}{c|}{$\checkmark$}                &  \multicolumn{1}{l|}{} &   \multicolumn{1}{c|}{}   & \multicolumn{1}{c|}{$\checkmark$}     \\
\multicolumn{1}{|l|}{Verifiable Randomization Mechanism \cite{kato2021preventing}}  & \multicolumn{1}{c|}{$\checkmark$}                &  \multicolumn{1}{l|}{} &   \multicolumn{1}{c|}{$\checkmark$}   & \multicolumn{1}{c|}{$\checkmark$}     \\
\multicolumn{1}{|l|}{Securely Sampling Biased Coins \cite{champion2019securely}}              & \multicolumn{1}{l|}{}                & \multicolumn{1}{c|}{$\checkmark$} & \multicolumn{1}{l|}{$\xmark$}     & \multicolumn{1}{c|}{$\checkmark$}     \\
\multicolumn{1}{|l|}{MPC-DP heavy hitters\cite{bohler2021secure}}              & \multicolumn{1}{l|}{}                & \multicolumn{1}{c|}{$\checkmark$} & \multicolumn{1}{l|}{$\xmark$}     & \multicolumn{1}{c|}{$\checkmark$}     \\
\multicolumn{1}{|l|}{PRIO \cite{corrigan-gibbs_prio_2017}}              & \multicolumn{1}{l|}{}                & \multicolumn{1}{c|}{$\checkmark$} & \multicolumn{1}{l|}{$\xmark$}     & \multicolumn{1}{c|}{$\checkmark$}     \\
\multicolumn{1}{|l|}{Brave STAR \cite{davidson2021star}}        & \multicolumn{1}{l|}{}                 & \multicolumn{1}{l|}{}             & \multicolumn{1}{l|}{$\xmark$}     & \multicolumn{1}{l|}{} \\
\multicolumn{1}{|l|}{Sparse Histograms \cite{bell2020secure}} & \multicolumn{1}{l|}{}                 & \multicolumn{1}{c|}{$\checkmark$} & \multicolumn{1}{l|}{$\xmark$}     & \multicolumn{1}{l|}{} \\
\multicolumn{1}{|l|}{Crypt-$\epsilon$ \cite{roy2020crypt}}           & \multicolumn{1}{l|}{}                 & \multicolumn{1}{c|}{$\checkmark$} & \multicolumn{1}{l|}{$\xmark$}     & \multicolumn{1}{l|}{}             \\
\multicolumn{1}{|l|}{Poplar \cite{boneh_lightweight_2022}}            & \multicolumn{1}{c|}{$\checkmark$} & \multicolumn{1}{c|}{$\checkmark$} & \multicolumn{1}{l|}{$\xmark$}     & \multicolumn{1}{l|}{} \\
\multicolumn{1}{|l|}{Our work}          & \multicolumn{1}{c|}{$\checkmark$}                     & \multicolumn{1}{c|}{$\checkmark$} & \multicolumn{1}{c|}{$\checkmark$} & \multicolumn{1}{c|}{$\checkmark$}     \\ \bottomrule
\end{tabular}
\end{table*}

Dwork \etal introduced DP and described the Laplace mechanism for outputting histograms in the trusted curator model~\cite{dwork2006calibrating}.  Soon after, McSherry \etal proposed the exponential mechanism \cite{mcsherry2007mechanism} (equivalently, report noisy max \cite{ding2021permute}), which lets us compute the (approximately) most frequent bucket in a histogram, also under pure differential privacy.  Although these mechanisms give us pure differential privacy and optimal error rates $O(\frac{1}{\epsilon})$, implementing such a ``central'' model requires trusting that the curator to follow the protocol and not exploit the client data that it sees in plaintext.

Therefore, researchers studied local privacy (LDP)  \cite{kasiviswanathan2011can} using randomised response \cite{warner1965randomized} to prevent any other party from seeing data in plaintext. 
Recently, Cheu, Smith and Ullman showed that the randomised response algorithm generalises all locally private protocols \cite{cheu2021manipulation}. 
This generalisation highlights two unavoidable disadvantages of local differential privacy. 
The first is that the accuracy of the protocol for even the binary histogram is $O(\sqrt{n})$ compared to $O(1)$ in the central model. 
The second is that randomised response systems offer a much weaker definition of privacy than the usual cryptography standards such as semantic security. 
For example, if the client flips their original answer with probability $p=0.1$, the curator sees their sensitive information in plain text 90\% of the time. 
Further increasing $p$ reduces the accuracy of the protocol dramatically. 
Consider the example from~\cite{corrigan-gibbs_prio_2017}, where 1\% of a million people answer ``yes'' to a survey about a sensitive topic. 
If we set $p=0.49$, then one-third of the time the central analyser concludes that not a single  member of the population answered ``yes''. 
Thus if we want to preserve utility, this definition of security is considerably weaker than the indistinguishability guarantees provided by protocols such as secret sharing.

Shuffle privacy \cite{cheu2021differential, balle2019privacy, erlingsson_amplification_2020} analyses local mechanisms under the lens of central privacy and bridges the accuracy gap between local and central models. Recent results~\cite{ghazi2019power, balcer_separating_2020} prove that near central error guarantees are possible with distributed local transformations. Although this bypasses the accuracy issue of LDP, shuffle privacy assumes the existence of a secure shuffler, which is non-trivial to implement. In recent work, Bell \etal show that secure aggregation realises secure shuffling \cite{bell2020secure}. 
However, such protocols impose the impractical constraint of secure peer-to-peer communication between clients, and the curator is still a single source of failure. 
Despite the immense progress on differentially private histogram estimation, all known efficient implementations assume semi-honest participants and are a variant of either randomised response or the additive mechanism. 
It only takes a small fraction of clients to deviate from their prescribed protocol to destroy any utility of randomised response \cite{cheu2021manipulation}. 
Additive mechanisms involve adding carefully curated randomness to the statistic before being released as output. 


To ensure central DP error without a trusted curator, Dwork \etal proposed using standard MPC for computing DP statistics~\cite{dwork2006our}. 
They proposed that each of the $K$ servers would own a fraction of the entire dataset used for computation. 
As long as not more than $\lfloor \frac{K}{3} \rfloor$ of the servers are dishonest, it is possible to compute DP-histograms with optimal accuracy. 
However, the protocol is not publicly auditable and breaks down in presence of a dishonest majority of adversarial corruptions. McGregor \etal show a separation between DP obtained using a trusted curator and that obtained using MPC \cite{mcgregor2010limits}. Specifically, they show that there exist computations (such as inner product or hamming distance) where mechanisms with $(1, 0)$-DP incur $\Omega(\sqrt{n})$ reconstruction error compared to $O(1)$ in presence of a trusted curator. 
To bridge this gap, Mironov \etal defined computational differential privacy, a relaxation of traditional DP~\cite{mironov2009computational}. They show that as long as semi-honest OT exists, it is possible to compute any computationally DP function with the same error rates as information theoretic DP in a trusted curator model. 
Histograms, unlike inner product and hamming distance, can be computed using MPC with the same error rates as trusted curator DP, under infomation theoretic DP. 
Thus recent work has focused on computing histograms using MPC. 

Bohler \etal use MPC to compute heavy hitters with semi-honest adversaries~\cite{bohler2021secure}. 
Researchers at Brave use oblivious pseudorandom functions (OPRF's) \cite{jarecki2009efficient} and Shamir secret sharing \cite{shamir1979share} to compute $k$-anonymous histograms in the two server setting~\cite{davidson2021star}.  However, they do not include support for differential privacy. Researchers at Google use linear homomorphic encryption and OPRFs to compute differentially private sparse histograms in two-server models (2PC)~\cite{bell2022distributed}, but require both the servers and clients to be semi-honest.
Gibbs and Boneh propose PRIO, a protocol in which a small number of servers receive arithmetic shares of client input to compute differentially private histograms \cite{corrigan-gibbs_prio_2017}. PRIO uses shared non interactive proofs (SNIP's) to prevent clients from submitting illegal inputs but the protocol is only honest-verifier zero knowledge. Following the popularity of PRIO, Addanki \etal introduce PRIO+ to work over Boolean shares \cite{addanki2022prio+}. Boneh \etal use distributed point functions (DPFs) \cite{boyle2019secure} to compute DP heavy-hitters in the two server model to propose a system called Poplar \cite{boneh_lightweight_2022} that is zero knowledge even in presence of active adversaries. Roy \etal introduce \textit{Crypt}-$\epsilon$, a generic system to compute differentially private statisitcs using garbled circuits and linear homomorphic encryption \cite{roy2020crypt}. The general purpose natue of \textit{Crypt}-$\epsilon$ guarantees security only in the semi-honest threat model. Ambainis \etal proposed cryptographic randomised response \cite{ambainis2004cryptographic} but are able to only guarantee local differential privacy. Table \ref{table:related_work} summarises the assumptions under which the latest MPC protocols that have been used to compute DP statistics. As described earlier, existing work either assumes semi-honest adversaries or is not auditable. In 2021, the State Of Alabama sued the US deparment of commerce with regard to the errors caused due to random noise \cite{courtCase}. Differential Privacy by its defintion introduces a random noise blanket that tradesoff accuracy for privacy. This randomness is unavoidable if we wanted to protect individual privacy, but it also enables a corrupt aggregating server to disguise adversarial behaviour as randomness. In our paper, we first upgrade to security against active adversaries. Like existing literature we work in the dishonest majority model and further require the protocols to be publicly auditable. Our privacy constraints describe the most strict adversarial setting for practical deployment.

\section{Concluding Remarks}

We have introduced the notion of verifiable differential privacy to prevent malicious aggregators using random noise as an attack vector. 
We have demonstrated feasibility of this notion, and showed that computational DP is necessary to achieve verifiability. 
A natural open question is to provide protocols for more complex DP mechanisms.  
Our protocol deliberately uses simple randomness (a Binomial distribution constructed from Bernoulli random variables), as making verifiable Laplace or Gaussian noise is far from clear.  
Similarly, approaches based on sampling from an appropriate distribution (the exponential mechanism) may be challenging, since the distribution itself leaks information about the private data. 
Another direction would be to support our approach within more expressive MPC frameworks (e.g., auditable-SPDZ~\cite{baum2014publicly}).

\bibliographystyle{alpha}
\bibliography{sample-base}

\appendix

\section{Formal Security Definitions}
\label{app:sec_defns}

\begin{definition} [Discrete Log Assumption]
For all PPT adversaries $\AdvA$, there exists a negligible function $\mu$ such that 
     \[ \Pr\left[ x = x^\prime : \begin{array}{c} (\G_q, g) \leftarrow \texttt{Setup}(1^\kappa) \\
     x \xleftarrow[]{R} \Z_q, h = g^x \\
     x^\prime \leftarrow \AdvA(\pp, h) \\
    \end{array} \right]
  \leq \mu(\kappa)  \] 

\label{defn:discrete_log}
\end{definition}

\begin{definition} (Hiding Commitments) Let $\kappa$ be the security parameter. A commitment scheme is said to be hiding for all PPT adversaries $\AdvA$ the following quantity is negligible. The commitment is perfectly hiding if $\mu(\kappa) = 0$.
\label{defn:hiding_com}
     \[ \Pr\left[ b = b^\prime : \begin{array}{c} \pp \leftarrow \texttt{Setup}(1^\kappa) \\
     b \xleftarrow[]{R} \bit, r_{x_b} \xleftarrow[]{R} \texttt{R}_\pp \\
     (x_0, x_1) \in \mathcal{M}_\pp^2 \leftarrow \AdvA(\pp) \\
     c = \Com(x_b, r_{x_b}), b^\prime = \AdvA(\pp, c)
    \end{array} \right]
  \leq \mu(\kappa)  \] 
\end{definition}

\begin{definition} (Binding Commitments) Let $\kappa$ be the security parameter. A commitment scheme is said to be binding if, for all PPT adversaries $\AdvA$, there exists a negligible function $\mu$ such that
\label{defn:binding_com}
     \[ \Pr\left[ (c_{x_0}= c_{x_1}) \land (x_0 \neq x_1): \begin{array}{c} \pp \leftarrow \texttt{Setup}(1^\kappa) \\
     x_0, r_{x_0}, x_1, r_{x_1} \leftarrow \AdvA(\pp)
    \end{array} \right]
  \leq \mu(\kappa)  \] 
The commitment is perfectly binding if $\mu(\kappa) = 0$.
\end{definition}

\section{The Binomial mechanism}
\label{app:dp_proof_bin_mech}

In this Appendix, we spell out the details of the differential privacy properties of Binomial noise addition (the Binomial mechanism). 
The results here were originally shown by Ghazi \etal~\cite{ghazi_power_2020}, and we include them here for completeness (we do not claim any novelty in this section). 

\begin{definition}
\label{def: k-incremental} \cite{ghazi_power_2020} A function $q: \mathcal{X}^n \rightarrow \Z^M$ is said to be $k$-incremental if for all neighbouring datasets $X \sim X^{\prime}$, $|| f(X) - f(X^{\prime})||_{\infty} \leq k$.
\end{definition}

It is easy to see that counting queries for histogram estimation are 1-incremental. The following definition describes valid noise distributions to ensure differential privacy.

\begin{definition}
\label{def: smooth} \cite{ghazi_power_2020} A distribution $D$ over $\Z$ is $(\epsilon, \delta, k)$-smooth if for all $k^{\prime} \in [-k, k]$ we have 
\begin{equation}
    \Pr_{Y \sim D} \Bigg[ \frac{\Pr_{Y^{\prime} \sim D}[Y^{\prime} = Y]}{\Pr_{Y^{\prime} \sim D}[Y^{\prime} = Y+k^{\prime}]} \geq e^{|k^{\prime}|\epsilon}\Bigg] \leq \delta
\end{equation}
\end{definition}

The result for the Binomial mechanism follows by showing that adding noise drawn from a smooth distribution ensures differential privacy and then showing that the Binomial distribution meets the smoothness definition.

\begin{lemma}[Lemma 4.11 in Appendix~C of \cite{ghazi_power_2020}]
\label{lemma:smooth_distributions}
Suppose $q: \mathcal{X}^n \rightarrow \Z^M$ is $k$-incremental i.e., for all neighboring datasets $X \sim X^{\prime}$ we have $||q(X) - q(X^{\prime}) ||_{\infty} = k$ and $\Delta(q) = || q(X) - q(X^{\prime}) ||_1 = \Delta$. Let $\mathcal{D}$ be a $(\epsilon, \delta, k)$-smooth distribution. Then the mechanism $M$
\begin{equation}
    M_{(Y_1, \dots, Y_M)}(X, q) = q(X) + (Y_1, \dots, Y_M)
\end{equation}
\noindent
is $(\epsilon\Delta, \delta\Delta)$ differentially private, where $(Y_1, \dots, Y_M) \overset{\text{i.i.d}}{\sim} D.$
\end{lemma}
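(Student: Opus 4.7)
The plan is to reduce the lemma to a coordinate-wise application of the smoothness definition, combined with a careful union bound that leverages the $L_1$-sensitivity (rather than the dimension $M$).

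First I would write, for neighbours $X \sim X'$, the vector of coordinate differences $k_j := q(X)_j - q(X')_j$ for $j \in [M]$. By the $k$-incremental hypothesis, $|k_j| \leq k$ for every $j$, and by the sensitivity hypothesis, $\sum_j |k_j| = \Delta$. Since $q$ is integer-valued, every nonzero $k_j$ satisfies $|k_j| \geq 1$, so the support $S := \{j : k_j \neq 0\}$ has $|S| \leq \Delta$. This is the structural observation that will let me pay $\delta \Delta$ rather than $\delta M$.

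Next, for each $j \in S$, apply the smoothness property with the shift $k' = k_j$: there exists a ``bad'' set $B_j \subseteq \Z$ such that $\Pr_{Y_j \sim D}[Y_j \in B_j] \leq \delta$, and for every $y_j \notin B_j$,
\begin{equation}
\frac{\Pr[Y_j = y_j]}{\Pr[Y_j = y_j + k_j]} \leq e^{|k_j|\epsilon}.
\end{equation}
Let $B := \{ \bvec{y} \in \Z^M : y_j \in B_j \text{ for some } j \in S\}$. By a union bound over $S$, $\Pr[\bvec{Y} \in B] \leq |S|\delta \leq \Delta \delta$.

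Now I would unfold the DP condition. For any measurable $T \subseteq \Z^M$, write
\begin{equation}
\Pr[M(X) \in T] = \Pr[\bvec{Y} \in T - q(X)] = \sum_{\bvec{y} \in T - q(X)} \prod_{j=1}^M \Pr[Y_j = y_j],
\end{equation}
using independence of the $Y_j$'s. Split the sum into the contribution from $\bvec{y} \in B$, which is at most $\Delta\delta$, and $\bvec{y} \notin B$. For every $\bvec{y} \notin B$, the per-coordinate bounds multiply (coordinates in $[M]\setminus S$ contribute factor $1$ since $k_j = 0$) to give
\begin{equation}
\prod_{j=1}^M \Pr[Y_j = y_j] \leq e^{\sum_{j \in S}|k_j|\epsilon} \prod_{j=1}^M \Pr[Y_j = y_j + k_j] = e^{\Delta \epsilon} \prod_{j=1}^M \Pr[Y_j = y_j + k_j].
\end{equation}
Recombining the ``good'' sum and recognizing that $\Pr[Y_j = y_j + k_j]$ corresponds to the density of $M(X')$ at the shifted point (since $q(X')_j = q(X)_j - k_j$), I obtain
\begin{equation}
\Pr[M(X) \in T] \leq e^{\epsilon\Delta}\Pr[M(X') \in T] + \Delta\delta,
\end{equation}
which is the claimed $(\epsilon\Delta, \delta\Delta)$-DP guarantee.

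The main obstacle I expect is the bookkeeping around the union bound, specifically justifying that it suffices to pay $\delta$ only for coordinates in $S$ rather than all of $[M]$. The key leverage is that for $j \notin S$ we have $k_j = 0$, so the ratio $\Pr[Y_j = y_j]/\Pr[Y_j = y_j + k_j] = 1$ trivially without invoking smoothness. After that, the estimate is mostly a product-vs-sum calculation enabled by coordinate independence of $\bvec{Y}$.
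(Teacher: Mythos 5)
Your proof is correct and takes essentially the same approach as the paper: both decompose the ratio of densities coordinate-wise, apply the smoothness property only at coordinates where $k_j \neq 0$, union-bound the failure probability over those coordinates, and use that the $L_1$-sensitivity bounds both the exponent $\sum_j |k_j|$ and (since $q$ is integer-valued) the number of nonzero coordinates by $\Delta$. The only cosmetic difference is that the paper first proves an abstract implication (that the tail bound on the ratio $g(\vec{y})$ implies DP) and then verifies the bound, whereas you inline the bad-set decomposition directly into the DP inequality; the substance is identical.
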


\begin{proof}
Let $X = (x_1, \dots, x_n)$ where $x_i \in \mathcal{X}$ and $X^\prime = (x_1, \dots, x_n^\prime)$. 
Let $\vec{y} = (y_1, \dots, y_M)$ and $\vec{Y} = (Y_1, \dots, Y_M)$ be i.i.d draws from $\mathcal{D}$. Assume that Equation \eqref{eq: smooth_dp} holds:
\begin{equation}
\label{eq: smooth_dp}
\Pr_{\vec{y} \sim \mathcal{D}}{[g(\vec{y}) \geq e^{\epsilon^\prime}]} \leq \delta^\prime
\end{equation}
\noindent
where $g(\vec{y}) = \frac{\Pr_{(\vec{Y} \sim \mathcal{D})}{[M_{\vec{Y}}(X, q) = q(X) + \vec{y} ]}}{\Pr_{(\vec{Y} \sim \mathcal{D})}{[M_{\vec{Y}}(X^\prime, q) = q(X^\prime) + \vec{y} ]}}$.

Let $S \subseteq \Z^M$ be an arbitrary subset in the range of $M$. Let $T = \{M_{\vec{Y}}(X, q) | g(\vec{y}) < e^{\epsilon^\prime}\}$ represent a set of outputs of $M$ over draws of $\vec{Y}$ such that $g(\vec{y}) < e^{\epsilon^\prime}$. 
Then from equation \eqref{eq: smooth_dp} we can show that $M$ is $(\epsilon^\prime, \delta^\prime)$ differentially private
\begin{align}
\Pr_{\vec{y} \sim \mathcal{D}}[M_{\vec{y}}(X,& q) \in S] \leq \delta^\prime + \Pr_{\vec{y} \sim \mathcal{D}}{[M_{\vec{y}}(X, q) \in S \cap T]}  \label{eq:smooth_dp_a}\\
&= \delta^\prime + \sum_{w \in S \cap T }\Pr_{\vec{y} \sim \mathcal{D}}{[M_{\vec{y}}(X, q) = w]} \\
&\leq \delta^\prime + \sum_{w \in S \cap T }e^{\epsilon^\prime}\Pr_{\vec{y} \sim \mathcal{D}}{[M_{\vec{y}}(X^\prime, q) = w]} \label{eq:smooth_dp_b}\\
&\leq \delta^\prime + \sum_{w \in S}e^{\epsilon^\prime}\Pr_{\vec{y} \sim \mathcal{D}}{[M_{\vec{y}}(X^\prime, q) = w]} \label{eq:smooth_dp_c}\\
&= \delta^\prime + e^{\epsilon^\prime}\Pr_{\vec{y} \sim \mathcal{D}}{[M_{\vec{y}}(X^\prime, q) \in S]}
\end{align}

Equation $\eqref{eq:smooth_dp_a}$ is from the law of total probability, equation $\eqref{eq:smooth_dp_b}$ comes from equation \eqref{eq: smooth_dp} assumption and equation \eqref{eq:smooth_dp_c} is true as $T \cap S \subseteq S$. 
Therefore all that remains is to show that equation \eqref{eq: smooth_dp} is true if $\mathcal{D}$ is as defined and $\epsilon^\prime = \epsilon\Delta, \delta^\prime= \delta\Delta$ to complete the proof.

Define $k_j = q(X)_j - q(X^\prime)_j$. As each coordinate of $q(X)$ is independently perturbed and $q$ is a deterministic function, equation \eqref{eq: smooth_dp} is equivalent to equation \eqref{eq: smooth_dp_eqiv}.
\begin{equation}
\label{eq: smooth_dp_eqiv}
\Pr_{\vec{y} \sim \mathcal{D}}{\Bigg[  \prod_{j=1}^M
\frac{\Pr_{(Y_j \sim \mathcal{D})}{[Y_j = y_j ]}}{\Pr_{(Y_j \sim \mathcal{D})}{[Y_j = y_j + k_j ]}} \geq e^{\epsilon^\prime}\Bigg]} \leq \delta^\prime
\end{equation}
Thus, in order to prove equation \eqref{eq: smooth_dp} is true, it suffices to show that equation \eqref{eq: smooth_dp_eqiv} holds. 
We know that $\mathcal{D}$ is a smooth distribution i.e for each $j \in [M]$
\begin{align}
\Pr_{\vec{y} \sim \mathcal{D}}{\Bigg[ 
\frac{\Pr_{(Y_j \sim \mathcal{D})}{[Y_j = y_j ]}}{\Pr_{(Y_j \sim \mathcal{D})}{[Y_j = y_j + k_j ]}} \geq e^{|k_j|\epsilon}\Bigg]} \leq \delta
\end{align}
We can apply the union bound to get the probability of the joint distribution over all indices.

\begin{align}
\Pr_{\vec{y} \sim \mathcal{D}}{\Bigg[ \prod_{j=1}^M
\frac{\Pr_{(Y_j \sim \mathcal{D})}{[Y_j = y_j ]}}{\Pr_{(Y_j \sim \mathcal{D})}{[Y_j = y_j + k_j ]}} \geq e^{\sum_{j=1}^m|k_j|\epsilon}\Bigg]}  \\
\leq \delta \sum_{j=1}^M \mathbb{I}(k_j \neq 0)
\end{align}

Given the sensitivity of $q$ is $\Delta$, at most $\Delta$ indices for $k_j$ can be non-zero and $(\sum_{j=1}^M |k_j|) \leq \Delta$. Finally we get the result we seek
\begin{align}
\Pr_{\vec{y} \sim \mathcal{D}}{\Bigg[ \prod_{j=1}^M
\frac{\Pr_{(Y_j \sim \mathcal{D})}{ [Y_j = y_j ]}}{\Pr_{(Y_j \sim \mathcal{D})}{[Y_j = y_j + k_j ]}} \geq e^{\Delta|k_j|\epsilon}\Bigg]} \leq \delta \Delta \label{eq: non_zero}
\end{align}
\end{proof}

\begin{lemma} [Based on Lemma 4.12 of Appendix~C in \cite{ghazi_power_2020}]
\label{lemma: bin_is_k_smooth}
Let $n \in \N$, $p \in [0, 1/2]$, $\alpha \in [0,1)$ and $k \leq \frac{n\alpha p}{2}$. Then the binomial distribution $\Bin(n, p)$ is a $(\epsilon, \delta, k)$-smooth distribution.
\end{lemma}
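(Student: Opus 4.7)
The plan is to control the log-ratio $L(y,k') := \log\bigl(\Pr[Y=y]/\Pr[Y=y+k']\bigr)$ pointwise under the event that $Y$ concentrates near its mean, and then use a binomial tail bound to absorb the exceptional set into $\delta$. Writing the binomial PMF as $\binom{n}{y}p^y(1-p)^{n-y}$, for $k'>0$ one has
\begin{equation}
L(y,k') \;=\; k'\log\!\frac{1-p}{p} \;+\; \sum_{i=1}^{k'}\log\!\frac{y+i}{n-y-i+1},
\end{equation}
and the case $k'<0$ is symmetric after re-indexing. Substituting $y = np + t$ collapses each summand to $\log\!\bigl(\tfrac{np+t+i}{n(1-p)-t-i+1}\bigr)$, so that combined with the prefactor $k'\log\tfrac{1-p}{p}$ the individual terms become $\log\!\bigl(1 + \tfrac{t + i - p}{np(1-p) \cdot (\text{stuff close to }1)}\bigr)$ up to lower-order corrections. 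The upshot is that $L(y,k')$ is, to first order, a linear function of the deviation $t = y - np$ with slope of order $k'/(np(1-p))$, plus an error of order $k'^2/(np(1-p))$ that is controlled by the hypothesis $k \le n\alpha p/2$.

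The first step I would carry out is to bound each summand deterministically under the assumption $|t|\le T$ for some threshold $T$ chosen later. Using $\log(1+x)\le x$ and $\log(1+x)\ge x - x^2$ on the valid range, I would establish
\begin{equation}
|L(y,k')| \;\le\; \frac{|k'|\,(|t|+|k'|)}{np(1-p)}\cdot C
\end{equation}
for an absolute constant $C$, provided $|t|+|k'| \ll np(1-p)$ — a regime guaranteed by $k\le n\alpha p/2$ and an appropriate choice of $T$. The second step is to invoke a Chernoff/Hoeffding bound for $\Bin(n,p)$: $\Pr[|Y-np|\ge T]\le 2\exp\!\bigl(-T^2/(2np(1-p))\bigr)$. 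Setting $T = \sqrt{2np(1-p)\ln(2/\delta)}$ makes this tail at most $\delta$, which exactly matches the ``$\le \delta$'' clause of smoothness (Definition~\ref{def: smooth}).

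Combining these, on the high-probability event $\{|Y-np|\le T\}$ the ratio satisfies $|L(Y,k')|\le |k'|\epsilon$ for
\begin{equation}
\epsilon \;=\; C\!\left(\sqrt{\frac{2\ln(2/\delta)}{np(1-p)}} + \frac{k}{np(1-p)}\right),
\end{equation}
which is the target smoothness bound; together with the tail estimate this verifies the definition. Specialised to $p=1/2$ and $k=1$, the second term is negligible and we recover $\epsilon = O(\sqrt{\ln(1/\delta)/n})$, matching the constant-10 form of Lemma~\ref{theorem:dp_guarantee} after being fed through Lemma~\ref{lemma:smooth_distributions} with sensitivity $\Delta=2$ for a single-coordinate change.

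The main obstacle I anticipate is bookkeeping on the Taylor expansion of $\log\!\bigl(\tfrac{y+i}{n-y-i+1}\bigr)$: one must cleanly separate the deterministic linear-in-$t$ piece (which accumulates to $|k'|\cdot O(|t|/(np(1-p)))$ and is controlled by concentration) from the deterministic quadratic-in-$k'$ piece (which is controlled by the hypothesis $k\le n\alpha p/2$), while handling both signs of $k'$ and the boundary behaviour when $y$ is close to $0$ or $n$. The constraint $k\le n\alpha p/2$ is precisely what keeps every factor $\tfrac{y+i}{n-y-i+1}$ bounded away from $0$ and $\infty$ on the concentration event, so that the Taylor remainder is uniformly small — this is the crux of why the hypothesis appears in exactly this form.
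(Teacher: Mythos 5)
Your proposal follows the same two-stage skeleton as the paper's proof — (i) restrict attention to a high-probability concentration window around $np$ and absorb the complement into $\delta$, then (ii) bound the likelihood ratio pointwise on that window — but the technical route through stage (ii) is genuinely different, and the version in the paper is tighter and more mechanical. The paper never takes logarithms or Taylor-expands: it writes $\frac{\Pr[Y=y]}{\Pr[Y=y+k']}$ as $\bigl(\tfrac{1-p}{p}\bigr)^{k'}\prod_{i=1}^{k'}\tfrac{y+i}{n-y-i+1}$, bounds the product by replacing each factor with the largest one, $\bigl(\tfrac{y+k'}{n-y-k'}\bigr)^{k'}$, and then plugs in the extremal $y$ on the boundary of the interval $\mathcal{E}=[(1-\alpha)np + k',\,(1+\alpha)np - k']$. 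This yields the closed form $\bigl(\tfrac{1+\alpha}{1-\alpha}\bigr)^{|k'|}=e^{|k'|\epsilon}$ with no remainder terms to control. The hypothesis $k\le n\alpha p/2$ enters not to keep a Taylor remainder small, as you propose, but to guarantee that the inner interval $[(1-\alpha/2)np,(1+\alpha/2)np]$ sits inside $\mathcal{E}$, so the multiplicative Chernoff bound at deviation $\alpha/2$ gives $\Pr[Y\notin\mathcal{E}]\le e^{-\alpha^2np/8}+e^{-\alpha^2np/(8+2\alpha)}=:\delta$, with the \emph{same} $\alpha$ that parameterises $\epsilon$. Your approach — Taylor-expanding $\log\frac{y+i}{n-y-i+1}$ around $y=np$ and invoking a sub-Gaussian tail with variance $np(1-p)$ — can certainly be pushed through, and in exchange it makes the linear-in-deviation versus quadratic-in-$k'$ structure more transparent. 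But it needs the remainder bookkeeping you flag, and the constants it produces will be looser than the paper's clean $(1+\alpha)/(1-\alpha)$.

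Two smaller points. First, the bound $\Pr[|Y-np|\ge T]\le 2\exp(-T^2/(2np(1-p)))$ is not Hoeffding (Hoeffding would give exponent $-2T^2/n$); it is a Bernstein/sub-Gaussian-with-variance bound, which does hold for binomials but should be cited as such, since the choice of tail inequality determines the final $\delta$. Second, the closing remark that Lemma~\ref{theorem:dp_guarantee} follows with sensitivity $\Delta=2$ is inconsistent with the paper: the counting query $Q(X)=\sum_i x_i$ with $x_i\in\{0,1\}$ is 1-incremental and has $\ell_1$-sensitivity $\Delta=1$, which is exactly what the paper's proof of Lemma~\ref{theorem:dp_guarantee} uses when it feeds the 1-smoothness of $\Bin(n_b,\tfrac12)$ through Lemma~\ref{lemma:smooth_distributions}.
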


\begin{proof}
Let $Y \sim \Bin(n, p)$, then $\Pr{[Y = y]} = \binom{n}{y}p^{y}(1-p)^{n-y}$. For any $-k \leq k^\prime \leq k$, define an interval $\mathcal{\epsilon} := [(1 - \alpha)np + k^\prime, (1 + \alpha)np - k^\prime]$. This an interval of size $k$ around the mean of the distribution. Note that as long as $k \leq \frac{np}{2}\alpha$, then the interval $\mathcal{\epsilon^\prime} := [(1 - \alpha/2)np, (1 + \alpha/2)np]$ is contained inside of $\mathcal{\epsilon}$. 
Thus if $y \sim \Bin(np)$ is not in $\mathcal{\epsilon}$, it is also not inside $\mathcal{\epsilon^\prime}$. We know how to bound the probability that $y \notin \mathcal{\epsilon^\prime}$ by using the multiplicative Chernoff bound. 
Invoking it, we get
\begin{align*}
\Pr_{y \sim \Bin(n,p)}{[y \notin \epsilon]} &\leq \Pr_{y \sim \Bin(n,p)}{[y \notin \epsilon^\prime]} \\
&\leq e^{-\frac{-\alpha^2np}{8}} + e^{-\frac{-\alpha^2np}{8+2\alpha}} \\
&= \delta \\
\end{align*}

Now for all $y \in \mathcal{\epsilon}$, we have for $0 \leq k^\prime \leq k$
\begin{align}
\frac{\Pr{[Y = y]}}{\Pr{[Y = y + k^\prime]}} &= \Big(\frac{1-p}{p}\Big)^{k^\prime} \prod_{i=1}^{k^\prime} \frac{y + i}{n-y -i+1} \\
&\leq \Big(\frac{1-p}{p}\Big)^{k^\prime} \Big(\frac{y + k^\prime}{n-y -k^\prime}\Big)^{k^\prime} \\ 
&\leq \Big(\frac{1-p}{p}\Big)^{k^\prime} \Big(\frac{(1 + \alpha)np}{n- (1 + \alpha)np}\Big)^{k^\prime} \label{eq: plug_in}\\
&= (1 + \alpha)^{k^\prime} \Big(\frac{1-p}{1- p - p\alpha}\Big)^{k^\prime} \label{eq: lhs}
\end{align}

$\eqref{eq: plug_in}$ comes from our assumption $y \in \mathcal{\epsilon}$ and so when $y=(1 + \alpha)np - k^\prime$ the ratio above is maximal.

Similarly for $-k \leq k^\prime \leq 0$ we have 
\begin{align}
\frac{\Pr{[Y = y]}}{\Pr{[Y = y + |k^\prime |]}} &= \Big(\frac{p}{1-p}\Big)^{|k^\prime|} \prod_{i=1}^{|k^\prime|} \frac{n-y +i}{y - i + 1} \\
&\leq \Big(\frac{p}{1-p}\Big)^{|k^\prime|} \Big( \frac{n-y +|k^\prime|}{y - |k^\prime|}\Big)^{|k^\prime|} \\
&\leq \Big( \frac{1 + p\alpha - p}{(1-\alpha)(1 - p)}\Big)^{|k^\prime|} \label{eq: plug_in2} \\
&\leq \Big(\frac{1 + \alpha}{1 - \alpha}\Big)^{|k^\prime|} \label{eq: p_restriction} \\
&= e^{|k|\epsilon} 
\end{align}

$\eqref{eq: plug_in2}$ comes from plugging in the smallest value for $y$ and \eqref{eq: p_restriction} comes from the fact that $p \leq 1/2$. 
It is easy to see that when $p \leq \frac{1}{2}$, that we have 

\begin{equation}
 \Big(\frac{1-p}{1- p - p\alpha}\Big)^{k^\prime} \leq 	\Big(\frac{1}{1 - \alpha}\Big)^{|k^\prime|} 
\end{equation}

Therefore we can upper bound equation \eqref{eq: lhs} by setting 
\begin{equation}
(1 + \alpha)^{k^\prime} \Big(\frac{1-p}{1- p - p\alpha}\Big)^{k^\prime} \leq  e^{|k|\epsilon} 
\end{equation}

Finally, we can prove smoothness using Bayes' rule. 
Let $g(y) = \frac{\Pr_{Y \sim \Bin(n,p)}[Y = y]}{\Pr_{Y \sim \Bin(n, p)}[Y = y+k^{\prime}]}$ and $\mathcal{D}=\Bin(n, p)$.
Then:
\begin{equation}
\begin{split}
\Pr_{y \sim \mathcal{D}} \Big[ g(y) \geq e^{|k^{\prime}|\epsilon}\Big] 
&\leq \Pr_{y \sim \mathcal{D}} \Big[ g(y) \geq e^{|k^{\prime}|\epsilon} \Big| y \notin \mathcal{\epsilon} \Big] + \delta \\
&\leq e^{-\frac{-\alpha^2np}{8}} + e^{-\frac{-\alpha^2np}{8+2\alpha}}  \label{eq: y_condition} \\
&\leq \delta
\end{split}
\end{equation}

$\eqref{eq: y_condition}$ comes from the fact that when $y \in \mathcal{\epsilon}$ we have $\frac{\Pr{[Y = y]}}{\Pr{[Y = y + k^\prime]}} \leq e^{\epsilon}|k^\prime|$ by how we defined $\epsilon$ in equation \eqref{eq: p_restriction}. So we have 

$\Pr_{y \sim \Bin(n, p)} \Bigg[ \frac{\Pr_{Y \sim \Bin(n,p)}[Y = y]}{\Pr_{Y \sim \Bin(n, p)}[Y = y+k^{\prime}]} \geq e^{|k^{\prime}|\epsilon} \Big| y \notin \mathcal{\epsilon} \Bigg] = 0$
\end{proof}

Putting all this together, we can now prove Lemma~\ref{theorem:dp_guarantee}: 

\begin{proof}[Proof of Lemma~\ref{theorem:dp_guarantee}]
From the definition of $k$-incremental functions in Definition \ref{def: k-incremental} of Appendix~\ref{app:dp_proof_bin_mech}, it is easy to see that the counting query is 1-incremental and has global sensitivity $\Delta=1$. So, by Lemma \ref{lemma:smooth_distributions} in Appendix~\ref{app:dp_proof_bin_mech} if we add noise $Z \sim \mathcal{D}$ where $\mathcal{D}$ is $(\epsilon, \delta, 1)$-smooth we get our result.  
Thus if we could show that $\Bin(\noisen, \frac{1}{2})$ is $(\epsilon, \delta, 1)$-smooth we would be done. 

From Lemma \ref{lemma: bin_is_k_smooth} in Appendix~\ref{app:dp_proof_bin_mech} we have that for $\alpha = \frac{e^\epsilon -1 }{e^\epsilon + 1}$ and $k \leq \frac{\noisen\alpha p}{2}$, where $p \leq \frac12$, the binomial distribution $\Bin(\noisen, p)$ is $(\epsilon, \delta, k)$-smooth, where $\delta \geq e^{-\frac{-\alpha^2 \noisen p}{8}} + e^{-\frac{-\alpha^2 \noisen p}{8+2\alpha}}$. 
Observe that by the definition of $\alpha$, for all $\epsilon \in [0, 1]$, $\alpha \geq \frac{\epsilon}{\sqrt{5}}$. 
Setting $p=\frac12$, we could write 
\begin{align}
\noisen p\alpha  \geq \noisen\frac{\epsilon}{2\sqrt{5}} \label{bin_1_smooth}
\end{align}

All we need is an adequate value for $\epsilon \in [0, 1]$ by which we get $\noisen p\alpha \geq 2$, which would prove that $\Bin(\noisen, \frac12)$ is 1-smooth. 
Notice that
\begin{align}
e^{-\frac{-\alpha^2\noisen p}{8}} + e^{-\frac{-\alpha^2 \noisen p}{8+2\alpha}} &\leq 2e^{-\frac{-\alpha^2 \noisen p}{10}} 
\leq 2e^{-\frac{-\epsilon^2 \noisen p}{50}}  = \delta
\end{align}
Re-arranging the terms and setting $p=\frac12$ we get $\epsilon = 10\sqrt{\frac{1}{\noisen}\ln\frac{2}{\delta}}$. 
Plugging it back into equation \eqref{bin_1_smooth} we get 
\begin{align*}
\noisen p\frac{\epsilon}{\sqrt{5}} = \frac{\epsilon}{2\sqrt{5}} 
= \noisen\sqrt{\frac{1}{5 \noisen}\ln\frac{2}{\delta}} 
& \geq \sqrt{\frac{\noisen}{5}\ln 2} \\
& \geq 2 \text{ for $\noisen > 30$ }
\end{align*} 

Therefore we have shown that $\Bin(\noisen, \frac12)$ is $(\epsilon, \delta, 1)$-smooth
\end{proof}
\begin{figure*}[!h]
    \centering
\begin{pchstack}[boxed]  
    \pseudocode[linenumbering , skipfirstln]{%
    \textbf{Verifier} \< \< \textbf{Prover} \\[][\hline]
     \< \text{ Common Input } g, h, \G_q, q, c \<    \\
    \<\< v_1, e_1 \xleftarrow{R} \Z_q^2 \\
    \<\< \text{Set } d_1 \text{ such that }d_1\Big(\frac{c}{g}\Big)^{e_1} = h^{v_1} \\
    \<\< b \xleftarrow{R} \Z_q \text{ and set } d_0 = h^b\\
    (d_0, d_1) \< \sendmessageleft*{(d_0, d_1)} \<(d_0, d_1) \\
    e \xleftarrow{R} \Z_q \< \sendmessageright*{e} \< e_0 = e - e_1 \mod q \\
    \<\< v_0 = b + e_0r \\
    \text{Check } e_1 + e_0 = e \< \sendmessageleft*{(v_0, e_0, v_1, e_1)} \< \\
    \text{Check }d_0c^{e_0} = h^{v_0} \text{ and } d_1c^{e_1} = g^{e_1}h^{v_1}\<\<
}
\end{pchstack}    
    \caption{Proof for convincing $\Vfr$ that $c = gh^{r}$ is in $L_{\texttt{Bit}}$ without revealing that $x=1$.}
    \label{fig:schnorr_or_a}
\end{figure*}

\begin{figure*}[!h]
    \centering
\begin{pchstack}[boxed]  
    \pseudocode[linenumbering , skipfirstln]{%
    \textbf{Verifier} \< \< \textbf{Prover} \\[][\hline]
     \< \text{ Common Input } g, h, \G_q, q, c \<    \\
    \<\< v_0, e_0 \xleftarrow{R} \Z_q^2 \\
    \<\< \text{Set } d_0 \text{ such that }d_0c^{e_0} = h^{v_0} \\
    \<\< b \xleftarrow{R} \Z_q \text{ and set } d_1 = h^b\\
    (d_0, d_1) \< \sendmessageleft*{(d_0, d_1)} \<(d_0, d_1) \\
    e \xleftarrow{R} \Z_q \< \sendmessageright*{e} \< e_1 = e - e_0 \mod q \\
    \<\< v_1 = b + e_1r \\
    \text{Check } e_1 + e_0 = e \< \sendmessageleft*{(v_0, e_0, v_1, e_1)} \< \\
    \text{Check }d_0c^{e_0} = h^{v_0} \text{ and } d_1c^{e_1} = g^{e_1}h^{v_1}\<\<
}
\end{pchstack}    
    \caption{Proof for convincing $\Vfr$ that $c_x = h^{r_x}$ is in $L_{\texttt{Bit}}$ without revealing that $x=0$.}
    \label{fig:schnorr_or_b}
\end{figure*}

\section{OR Protocol}
\label{app:sigma_open}

Define as public parameters a cyclic prime order group $\G_q$ and generators $g$ and $h$ for $\G_q$. Let $\mathcal{M}_\pp = \mathcal{R}_\pp = \Z_q$. Pedersen Commitments defined below satisfy all the properties described in Section~\ref{sec: commitments}.
\begin{equation}
    \Com(x, r_x) = g^xh^{r_x}
\end{equation}

For the sake of completeness, we describe the interactive disjunctive OR proof using $\Sigma$-protocols from \cite{cramer1994proofs}. In all implementations in this paper, we use the Fiat-Shamir transform, which makes protocols non-interactive and can be shown to be secure in the random oracle model as described in \cite{thaler2020proofs}. 
Note we choose the non-interactive version for efficiency reasons but the $\Sigma$ protocols are zero-knowledge even without a random oracle. Maurer \cite{maurer2009unifying} shows that if the verifier's challenge space is polynomial sized then the protocol can be shown to be zero-knowledge. 
This does affect the soundness of the protocol but it can be made negligible by a constant number of repetitions.
Damgard \etal show that by using Trapdoor commitments \cite{damgaard2000efficient}, one can preserve soundness and get zero-knowledge but the protocol now has 4 messaging rounds instead of 3. Next we describe the $\Sigma$-protocol that can used to verify the OR condition. \par 
Let $x \in \bit$ and $c_x = \Com(x, r_x)$ for $r_x \xleftarrow{R} \Z_q$ be the commitment to $x$. Given $c_x$, $\Pi_{\texttt{OR}}$ is an interactive zero-knowledge proof between a prover $\Pv$ and a verifier $\Vfr$ to show that $c_x \in L_{\texttt{Bit}}$. The security properties can be found in \cite{thaler2020proofs, damgaard2000efficient, cramer1994proofs}. 
\begin{equation}
L_{\texttt{Bit}} = \{c_x: x \in \bit \land c_x = \Com(x, r_x) \}    
\end{equation}
where for some $r_x \in \Z_q$. Based on the value for $x$, the prover uses the protocol described in Figure \ref{fig:schnorr_or_a} or Figure \ref{fig:schnorr_or_b}. The verifier cannot distinguish between the protocol's two runs as the messages are indistinguishable. In case the the inputs $\vec{x}$ are bit strings of size $M$ (like in PRIO and Poplar) and only one coordinate can be non-negative, the prover sends to the verifier $r = \sum_{j=1}^M r_{x_j}$ along with the $\Sigma$-proofs, where $r_{x_j}$ is the randomness used to create commitments for the coordinate $x_j \in \bit$. As $\vec{x} \in L$, implies $\vec{x} \in \bit^M$ and $|| \vec{x}||_1 = 1$, the OR proofs verify the first criterion and the second criterion is easily verified by checking $c_{|| \vec{x}||} = \prod_{j=1}^M c_{x_m}$ is a a commitment to one i.e., check if $g^1h^r = c$. 

\section{Deferred Security Proofs}
\label{app:single_curator_sec_proof}

\parag{Single Curator Simulator Proof}

\begin{theorem}
Let  $\Vfr^*$ denote the corrupted verifier. There exists a PPT Simulator $\Sim_{(\Vfr^*)}$ such that for all $y = \mathcal{M}_\bin(X, Q)$
 \begin{align*}
    \texttt{View}\left[\Pi(\Pv, \Vfr)\right] &\stackrel{c}{\equiv} \Sim_{(\Vfr^*}(y, \bvec{r}_v, z, \pp)        
    \end{align*}

    where $z \in \bit^{\texttt{poly}(\kappa)}$ and $\bvec{r}_v \in \bit^{\texttt{poly}(\kappa)}$ represents auxiliary input and randomness available to all the corrupted parties.  	
\end{theorem}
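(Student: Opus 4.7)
The plan is to adapt the $K=2$ simulator from the proof of Theorem \ref{theorem:verifiable_dp_feasible} to the degenerate single-prover setting, where there is no honest prover to ``hide behind'' and the simulator must manufacture the entire prover transcript from the ideal output $y$ alone. First, $\Sim_{\Vfr^*}$ receives $\pp$, the auxiliary input $z$ and the verifier's randomness $\bvec{r}_v$, together with the public client commitments $\{c_i\}_{i\in[n]}$. It invokes the ideal functionality $\mathcal{M}_{\bin}$ to obtain the output $y$.

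For the bit-commitment phase (Line 4 of Figure \ref{fig:dp_interactive_proof}), $\Sim$ samples $\hat{v}_j \xleftarrow{R} \bit$ for $j = 2,\dots,\noisen$, picks $s_j \xleftarrow{R} \mathcal{R}_\pp$, and publishes $c'_j = \Com(\hat{v}_j, s_j)$. The first commitment $c'_1$ is chosen to make the verifier's final algebraic check align with $y$: concretely, after the simulated execution of $\oracle_{\morra}$ (step below) fixes the public bits $(b_1,\dots,b_{\noisen})$, the simulator samples a random $z \xleftarrow{R} \mathcal{R}_\pp$, sets
\[
a = \Com(y, z)\cdot \Big(\prod_{i=1}^{n} c_i\Big)^{-1}\cdot \Big(\prod_{j=2}^{\noisen}\hat{c}'_j\Big)^{-1}\cdot g^{-1}
\]
and defines $c'_1$ so that $\hat{c}'_1 = g\cdot a$ under the case-rule of Line 12 (i.e., $c'_1 = a$ or $c'_1 = \Com(1,0)\cdot a^{-1}$, depending on $b_1$). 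By the perfectly hiding property of Pedersen commitments, $c'_1$ is indistinguishable from a fresh commitment to a bit. The $\oracle_{\texttt{OR}}$ query on each $c'_j$ is handled by the standard $\Sigma$-OR simulator (Appendix \ref{app:sigma_open}), which is honest-verifier zero knowledge and therefore produces accepting transcripts without a witness for $c'_1$. The call to $\oracle_{\morra}$ is simulated using Blum's coin-flipping simulator, run with $\Vfr^*$ to extract and rewind so that the jointly sampled $b_j$'s agree with the values $\Sim$ used above.

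Finally, $\Sim$ outputs $(y, z)$ as the prover's last message. Indistinguishability is then proved by a short hybrid argument: $H_0$ is the real transcript; in $H_1$ we replace all $\Sigma$-OR transcripts by simulated ones (computationally indistinguishable by HVZK and the Fiat–Shamir / random-oracle analysis); in $H_2$ we replace $c'_1$ by the algebraically forced value (perfectly indistinguishable by the hiding property, since Pedersen commitments are uniform in $\G_q$ regardless of the committed bit); in $H_3$ we replace $z$ by the freshly sampled $z$ and observe that the verification equation is satisfied by construction. Each hop introduces at most a negligible advantage, yielding the claim.

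The main obstacle is the last algebraic step: because there is only one prover, the simulator cannot use another party's randomness as a ``sink'' as in the $K=2$ case, so $c'_1$ must absorb the entire discrepancy between $\Com(y,z)$ and the honestly formed product. Verifying that this still produces a transcript statistically (in fact perfectly) distributed as in the real protocol relies crucially on the perfect hiding of Pedersen commitments; any commitment scheme that is only computationally hiding would degrade this hop from perfect to computational indistinguishability, but the overall statement is computational zero knowledge so this remains sufficient. A secondary subtlety is ensuring that the Morra simulator can be composed with the OR-proof simulator without creating a rewinding conflict — this is handled by running $\oracle_{\morra}$ after all $c'_j$ are dispatched, as in Figure \ref{fig:dp_interactive_proof}, so the schedule of rewinds stays sequential.
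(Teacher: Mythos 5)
Your proposal is correct and follows essentially the same route as the paper's own simulator: commit honestly to $\noisen-1$ dummy bits, algebraically force the remaining commitment $c'_1$ so that the verifier's product check against $\Com(y,z)$ for a freshly sampled $z$ succeeds, and rely on the perfect hiding of Pedersen commitments to argue the forced commitment is distributed like a genuine bit commitment. You in fact supply details the paper leaves implicit — the explicit hybrid argument, the need to simulate the $\Sigma$-OR proof for $c'_1$ without a witness, and the ordering issue between fixing the Morra bits and emitting the commitments — so the only blemish is a minor algebraic slip in your parenthetical case analysis for $c'_1$ (for $b_1=0$ it should be $c'_1 = g\cdot a$ and for $b_1=1$ it should be $c'_1 = a^{-1}$), which does not affect the argument.
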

\begin{proof}
Denote the corrupted verifier as $\Vfr^*$.  $\Sim$ receives on its input tape the inputs for $\Vfr^*$. The ideal oracle functionality $\mathcal{M}_\bin$ is defined as before. Let $\Sim$ denote shorthand for $\Sim_{\Vfr^*}$.  We construct the simulator as follows:
\begin{enumerate}    
    \item{$\Sim$ receives the public messages $\{c_i\}_{i \in [n]}$.}

    \item{$\Sim$ invokes $\mathcal{M}_{\texttt{bin}}$ with the empty string $\lambda$ and receives $y$ as defined in equation \eqref{eq:M_bin}.}
    
    \item{$\Sim$ samples $z \xleftarrow[]{R} \mathcal{R}_\pp$ and sets $c = \Com(y, z)$.}

    \item{$\Sim$ samples $c'_{2}, \dots, c'_{{\noisen}}$ such that $c'_{j} = \Com(1, s_{j})$ where $s_{j} \xleftarrow[]{R} \mathcal{R}_\pp$. 
     It sets $c'_{1} = g^1a$ where $a = c \times  \Big( \prod_{j=2}^{\noisen} \hat{c}'_{j}\Big)^{-1} \times \Big( \prod_{i=1}^{n} c_{i}\Big)^{-1} \times g^{-1}$.}
    
     \item{$\Sim$ sends over $\{c_{j}\}_{j \in [\noisen]}$ to $\AdvA$ pretending to be the honest prover (Line 4 of Figure \ref{fig:dp_interactive_proof}).}
	    
    \item{$\Sim$ pretends to be the prover and jointly invokes $\oracle_{\texttt{Morra}}$ with $\AdvA$ to sample $\noisen$ unbiased public bits $(b_{1}, \dots, b_{\noisen})$.}
    
    \item{$\Sim$ sends $y$ and $z$ to $\AdvA$ and outputs whatever $\AdvA$ outputs.}\end{enumerate}\end{proof}

\end{document}